\newcommand{\D}{\mathbb{D}}
\renewcommand{\S}{\mathbb{S}}
\renewcommand{\L}{\ensuremath{\mathcal{L}}}
\newcommand{\dupcost}{\hat{d}}
\newcommand{\lcamap}{\mu}
\newcommand{\map}{\alpha}
\newcommand{\G}{\mathcal{G}}
\renewcommand{\P}{\mathcal{P}}
\newtheorem{nclaim}{Claim}
\title{Reconciling Multiple Genes Trees via Segmental Duplications and Losses}
\titlerunning{Reconciling Multiple Genes with Segmental Duplications}  
\author{Riccardo Dondi}{Dipartimento di Filosofia, Lettere, Comunicazione, Universit\`a degli Studi di Bergamo, Bergamo, Italy,}{riccardo.dondi@unibg.it}{}{}
\author{Manuel Lafond}{Department of Computer Science, Universit\'e de Sherbrooke, Qu\'ebec, Canada,}{manuel.lafond@USherbrooke.ca}{}{}
\author{Celine Scornavacca}{ISEM, CNRS, Universit\'e de Montpellier, IRD, EPHE, Montpellier, France,}{celine.scornavacca@umontpellier.fr}{}{}
\authorrunning{R. Dondi, M. Lafond and C. Scornavacca} 
\subjclass{F.2.2 Nonnumerical Algorithms and Problems, G.2.1 Com-binatorics, G.2.2 Graph Theory, J.3 Life and Medical Sciences}
\keywords{Gene trees/species tree reconciliation, phylogenetics, computational complexity, fixed-parameter algorithms}
\begin{document}

\maketitle              

 \begin{abstract}
 Reconciling gene trees with a species tree is a fundamental problem to understand the evolution
 of gene families. Many existing approaches reconcile each gene tree independently. However, 
 it is well-known that the evolution of gene families is interconnected. 
 In this paper, we extend a previous approach to reconcile a set of gene trees with a species tree 
 based on segmental macro-evolutionary events, 
 where segmental duplication events and losses are associated with cost $\delta$ and $\lambda$, respectively.
 We show that the problem is polynomial-time solvable when $\delta \leq \lambda$ (via LCA-mapping),
 while if $\delta > \lambda$ the problem is NP-hard, even when $\lambda = 0$ and a single
 gene tree is given, solving a long standing
 open problem on the complexity of the reconciliation problem.
 On the positive side, we give a fixed-parameter algorithm for the problem, where the parameters are $\delta/\lambda$
 and the number $d$ of segmental duplications, of time complexity 
 $O(\lceil \frac{\delta}{\lambda} \rceil^{d} \cdot n \cdot \frac{\delta}{\lambda})$. 
Finally, we demonstrate the usefulness of this algorithm on two previously studied real datasets: we first show that our method can be used to confirm or refute hypothetical segmental duplications on a set of 16 eukaryotes, then show how we can detect whole genome duplications in yeast genomes.
\end{abstract}

\section{Introduction}
It is nowadays well established that the evolution of a gene family can differ from that of the species containing these genes. This can be due to quite a number of different reasons, including gene duplication, gene loss, horizontal gene transfer or incomplete lineage sorting, to only name a few \cite{maddison1997gene}. 
While this discongruity between the gene phylogenies  (the \emph{gene trees}) and the species phylogeny (the \emph{species tree}) complicates the process of reconstructing the latter from the former
, every cloud has a silver lining: ``plunging''  gene trees into the species tree and analyzing the differences between these topologies, one can gather  hints to unveil the macro-evolutionary events that shaped gene evolution. This is the rationale behind the \emph{species tree-gene tree reconciliation}, a concept introduced in  \cite{goodman1979fitting} and first formally defined in \cite{page1994maps}. Gathering intelligence on these macro-evolutionary events permits us to better understand the mechanisms of evolution with applications ranging from orthology detection~\cite{lafond2018accurate,ullah2015integrating} to ancestral genome reconstruction~\cite{duchemin2017decostar}, and recently in dating phylogenies~\cite{davin2018gene,Chauve127548}. 

It is well-known that the evolution of gene families is interconnected. However, in current pipelines, each gene tree is reconciled independently with the species tree, even when posterior to the reconciliation phase the genes are considered as related, e.g. \cite{duchemin2017decostar}. A more pertinent approach would be to reconcile the set of gene trees at once and consider \emph{segmental} macro-evolutionary events, i.e. that concern a chromosome segment instead of a single gene.   

Some work has been done in the past to model segmental gene duplications and three models have been considered: the {\sc EC} (\emph{Episode Clustering}) problem, the {\sc ME} (\emph{Minimum Episodes}) problem \cite{guigo1996reconstruction,bansal2008multiple}, and the {\sc MGD} (\emph{Multiple Gene Duplication}) problem \cite{fellows1998multiple}.  
The {\sc EC} and {\sc MGD} problems both aim at clustering duplications together by minimizing the number of locations in the species tree where at least one duplication occurred, with the additional requirement that a cluster cannot contain two gene duplications from a same gene tree in the {\sc MGD} problem. The 
 {\sc ME} problem is  more biologically-relevant, because it aims at minimizing the actual number of segmental duplications (more details in Section \ref{def:ME}). 
Most of the exact solutions proposed for the {\sc ME} problem \cite{bansal2008multiple,luo2011linear,paszek2017efficient} deal with a constrained version,
since the possible mappings of a gene tree node are limited to given intervals, see for example  \cite[Def. 2.4]{bansal2008multiple}. 
In~\cite{paszek2017efficient}, a simple $O^*(2^{k})$ time algorithm is presented for the unconstrained version (here $O^*$ hides polynomial factors), where $k$ is the number of speciation nodes that have descending duplications under the LCA-mapping
This shows that the problem is fixed-parameter tractable (FPT) in $k$.  But since the LCA-mapping maximizes the number of speciation nodes, there is no reason to believe that $k$ is a small parameter, and so more practical FPT algorithms are needed.

In this paper, we extend the unconstrained ME model to gene losses and provide a variety of new algorithmic results.
We allow weighing segmental duplication events and loss events by separate costs $\delta$ and $\lambda$, respectively.
We show that if $\delta \leq \lambda$, then an optimal reconciliation can be obtained by reconciling each gene tree separately under the usual LCA-mapping, even in the context of segmental duplications.
On the other hand, we show that if $\delta > \lambda$ and both costs are given, 
reconciling a set of gene trees while minimizing segmental gene duplications and  gene losses is NP-hard.  The hardness also holds in the particular case that we ignore losses, i.e. when $\lambda = 0$. This solves a long standing open question on the complexity of the reconciliation problem under this model (in \cite{bansal2008multiple}, the authors already said 
\emph{``it would be interesting to extend the [...] model of Guigó et al. (1996) by relaxing the constraints
on the possible locations of gene duplications on the species
tree''}. The question is stated as still open in \cite{paszek2017efficient}). The hardness holds   
also when only a single gene tree is given. On the positive side, we describe an algorithm that is practical when $\delta$ and $\lambda$ are not too far apart.  More precisely, we show that 
multi-gene tree reconciliation is fixed-parameter tractable in the ratio $\delta/\lambda$ and the number $d$ of segmental duplications, and can be solved in time $O(\lceil \frac{\delta}{\lambda} \rceil^{d} \cdot n \cdot \frac{\delta}{\lambda})$.  
The algorithm has been implemented and tested and is available\footnote{To our knowledge, this is the first publicly available reconciliation software for segmental duplications.} at \url{https://github.com/manuellafond/Multrec}.  We first evaluate the potential of multi-gene reconciliation on a set of 16 eukaryotes, and show that our method can find scenarios with less duplications than other approaches.  While some previously identified segmental duplications are confirmed by our results, it casts some doubt on others as they do not occur in our optimal scenarios.  We then show how the algorithm can be used to detect whole genome duplications in yeast genomes.
Further work includes incorporating in the model segmental gene losses and segmental horizontal gene transfers, with a similar flavor than the heuristic method discussed in \cite{phdDuchemin}.
\section{Preliminaries}

\subsection{\bf Basic notions}

For our purposes, a {\em rooted phylogenetic tree} $T=(V(T),E(T))$ is an oriented tree, where $V(T)$ is the set of nodes, $E(T)$ is the set of arcs, 
all oriented away from $r(T)$, the root. Unless stated otherwise, all trees in this paper are 
rooted phylogenetic trees.
A \emph{forest} $F = (V(F), E(F))$ is a directed graph in which every connected component is a tree.  Denote by $t(F)$ the set of trees of $F$ that are formed by its connected components.
Note that a tree is itself a forest.  In what follows, we shall extend the usual terminology on trees to forests.

For an arc~$(x,y)$ of $F$, we call $x$ the \emph{parent} of $y$, and $y$ a \emph{child} of $x$.
If there exists a path that starts at $x$ and ends at $y$, then $x$ is an \emph{ancestor} of $y$ and $y$ is a \emph{descendant} of $x$.  We say $y$ is a \emph{proper} descendant of $x$ if $y \neq x$,
and then $x$ is a proper ancestor of $y$.
This defines a partial order denoted by $y \leq_{F} x$, and $y <_F x$ if $x\not=y$ 
(we may omit the $F$ subscript if clear from the context).
If none of $x \leq y$ and $y \leq x$ holds, then $x$ and $y$ are \emph{incomparable}.
The set of children of $x$ is denoted $ch(x)$ and its parent $x$ is denoted $par(x)$ 
(which is defined to be $x$ if $x$ itself is a root of a tree in $t(F)$).  For some integer $k \geq 0$, 
we define $par^k(x)$ as the $k$-th parent of $x$.
Formally, $par^0(x) = par(x)$ and $par^k(x) = par(par^{k - 1}(x))$ for $k > 0$.
The number of children $|ch(x)|$ of $x$ is called the \emph{out-degree} of $x$.
Nodes with no children are {\em leaves}, all others are {\em internal nodes}.
The set of leaves of a tree $F$ is denoted by $L(F)$.
The leaves of $F$ are bijectively labeled by a set $\L(F)$ of labels.
A forest is {\em binary} if $|ch(x)|=2$ for all internal nodes~$x$. 
Given a set of nodes $X$ that belong to the same tree $T \in t(F)$, the \emph{lowest common ancestor} of $X$, denoted $LCA_F(X)$, is the node $z$ that satisfies $x \leq z$ for all $x \in X$ and such that no child of $z$ satisfies this property.  We leave $LCA_F(X)$ undefined if no such node exists (when elements of $X$ belong to different trees of $t(F)$).  We may write $LCA_F(x, y)$ instead of $LCA_F(\{x,y\})$.
The \emph{height} of a forest $F$, denoted $h(F)$, is the number of nodes of a longest directed path from a root to a leaf in a tree of $F$ (note that the height is sometimes defined as the number of arcs on such a path - here we use the number of nodes instead).
Observe that since a tree is a forest, all the above notions also apply on trees.

\subsection{\bf Reconciliations}
\label{sec:rec}

A reconciliation usually involves two rooted phylogenetic trees, a {\em gene tree}~$G$ and a {\em species tree}~$S$, which we always assume to be both binary.  In what follows, we will instead define reconciliation 
between a gene forest $\G$ and a species tree.  Here $\G$ can be thought of as a set of gene trees.
Each leaf of $\G$ represents a distinct extant gene, and $\G$ and $S$ are related by a function $s: \L(\G) \rightarrow \L(S)$, which means that each extant gene belongs to an extant species.
Note that $s$ does not have to be injective (in particular, several genes from a same gene tree $G$ of $\G$ 	
can belong to the same species) or surjective (some species may not contain any gene of $\G$). Given $\G$ and $S$, we will implicitly assume the existence of the function $s$. 

In a $\mathbb{DL}$ reconciliation, each node of $\G$ is associated to a node of $S$ and an event --
a speciation ($\S$), a duplication ($\D$) or a contemporary event  ($\mathbb{C}$) -- under some constraints. 
A contemporary event $\mathbb{C}$ associates a leaf $u$ of $\G$ with a leaf $x$ of $S$ such that $s(u) = x$. A speciation in a node $u$ of $\G$ is constrained to the existence of two separated paths from the mapping of $u$ to the mappings of its two children, while the only constraint given by a duplication event  is that evolution of $\G$ cannot go back in time. 
More formally:

\begin{definition}[Reconciliation]\label{def:rec}
Given a gene forest $\G$ and a species tree $S$, a \emph{reconciliation} between $\G$ and $S$ is a function $\alpha$ that maps each node $u$ of $\G$ to a pair $(\alpha_r(u),\alpha_e(u))$ where $\alpha_r(u)$ is a node of $V(S)$ and $\alpha_e(u)$ is  an event of type $\S, \D$ or $\mathbb{C}$, such that:

\begin{enumerate}
\item
if $u$ is a leaf of $\G$, then $\alpha_e(u) = \mathbb{C}$ and $\alpha_r(u) = s(u)$;
\item
if $u$ is an internal node of $\G$ with children $u_1, u_2$, then exactly one of following cases holds:
\begin{itemize}
\item[$\bullet$]
$\alpha_e(u) = \mathbb{S}$, $\alpha_r(u) = LCA_S(\alpha_r(u_1),\alpha_r(u_2))$ and 
$\alpha_r(u_1),\alpha_r(u_2)$ are incomparable;
\item[$\bullet$]
$\alpha_e(u) = \mathbb{D}$, $\alpha_r(u_1) \leq \alpha_r(u)$ and $\alpha_r(u_2) \leq \alpha_r(u)$ 
\end{itemize}
\end{enumerate}
\label{def:recn} 
\end{definition}

Note that if $\G$ consists of one tree, this definition coincides with the usual one given in the literature (first
formally defined in \cite{page1994maps}). 
We say that $\alpha$ is an \emph{LCA-mapping} if, for each internal node $u \in V(\G)$ with children $u_1, u_2$, 
$\alpha_r(u) = LCA_S( \alpha_r(u_1), \alpha_r(u_2) )$.  Note that there may be more than one LCA-mapping, 
since the $\mathbb{S}$ and $\mathbb{D}$ events on internal nodes can vary.
The number of duplications of $\alpha$, denoted by $d(\alpha)$ is the number of nodes $u$ of $\G$ such that $\alpha_e(u)=\D$. For counting the losses, first define for $y \leq x$ the distance $dist(x, y)$ as the number of arcs on the path from $x$ to $y$.
Then, for every internal node $u$ with children  $\{u_1,u_2\}$, the number of losses associated with $u$ in a reconciliation $\alpha$, denoted by $l_{\alpha}(u)$, is defined as follows:

 \begin{itemize}
\item[$\bullet$]  
if $\alpha_e(u)=\S$, then $l_{\alpha}(u) = dist(\alpha_r(u), \alpha_r(u_1)) + dist(\alpha_r(u), \alpha_r(u_2)) - 2$; 

\item[$\bullet$] 
if $\alpha_e(u)=\D$, then $l_{\alpha}(u) = dist(\alpha_r(u), \alpha_r(u_1)) + dist(\alpha_r(u), \alpha_r(u_2))$.
\end{itemize}

The number of losses of a reconciliation $\alpha$, denoted by $l(\alpha)$, is the sum of $l_{\alpha}(\cdot)$ for all internal nodes of $\G$. The usual cost of $\alpha$, denoted by $cost(\alpha)$, is $d(\alpha) \cdot \delta+l(\alpha) \cdot \lambda$ \cite{ma2000gene}, where $\delta$ and $\lambda$ are respectively the cost of a duplication and a loss event (it is usually assumed that speciations do not incur cost).  
A \emph{most parsimonious reconciliation}, or MPR, is a reconciliation $\alpha$ of minimum cost.
It is not hard to see that finding such an $\alpha$ can be achieved by computing a MPR
for each tree in $t(\G)$ separately.  This MPR is the unique LCA-mapping $\alpha$
in which $\alpha_e(u) = \mathbb{S}$ whenever it is allowed according to Definition~\ref{def:rec} \cite{chauve2009new}.

\subsection{Reconciliation with segmental duplications\label{def:ME}}

Given a reconciliation $\alpha$ for $\G$ in $S$, and given $s \in V(S)$, 
write $D(\G, \alpha, s) = \{u \in V(\G) : \alpha_e(u) = \mathbb{D}$ and $\alpha_r(u) = s\}$ for the set of duplications of $\G$ mapped to $s$.
We define $\G[\alpha, s]$ to be the subgraph of $\G$ induced by the nodes in $D(\G, \alpha, s)$.
Note that $\G[\alpha, s]$ is a forest.

\begin{figure}[H]
\begin{center}
\includegraphics[width=0.95\linewidth]{./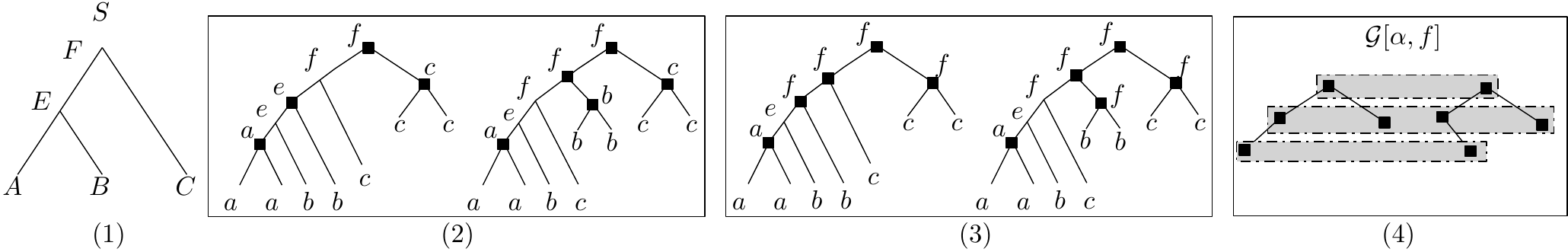}
\caption{(1) A species tree $S$.  (2) A gene forest $\G$ with two gene trees reconciled under the MPR that we denote $\lcamap$.  The nodes are labeled by the lowercase name of the species they are mapped to.  Black squares indicate duplication nodes.  Losses are not shown.  (3) The same forest $\G$ but with another reconciliation $\map$ for the internal nodes.  (4) The forest $\G[\map, f]$, along with a partition into (possible) segmental duplications.
\label{fig:example}}
\end{center}
\end{figure}

Here we want to tackle the problem of reconciling several gene trees at the same time and counting segmental duplications only once. Given a set of duplications nodes $\mathcal{D} \in V(\G)$ occurring in a given node $s$ of the species tree, it is easy to see that the minimum the number of segmental 
duplications associated with $s$ is the minimal number of parts in a partition of $\mathcal{D}$ in which each part does not contain comparable nodes.  See Figure~\ref{fig:example}.(4) for an example.  This number coincides \cite{bansal2008multiple}  with $h_{\alpha}(s) := h(\G[\alpha, s])$, i.e. the height of the forest of the duplications in $s$. 
Now, denote $\dupcost(\alpha) = \sum_{s \in V(S)} h_{\alpha}(s)$.
For instance in Figure~\ref{fig:example}, under the mapping $\lcamap$ in (2), we have $\dupcost(\lcamap) = 6$, because $h_{\lcamap}(s) = 1$ for $s \in \{A,B,C,E\}$ and $h_{\lcamap}(F) = 2$.  
But under the mapping $\map$ in (3), $\dupcost(\map) = 4$, since 
$h_{\map}(A) = 1$ and $h_{\map}(F) = 3$.  Note that this does not 
consider losses though --- the $\map$ mapping has more losses than $\lcamap$.

The cost of $\alpha$ is $cost^{SD}(\G, S, \alpha) = \delta \cdot \dupcost(\alpha) + \lambda \cdot l(\alpha)$.
If $\G$ and $S$ are unambiguous, we may write $cost^{SD}(\alpha)$.
We have the following problem :

\vspace{3mm}

\noindent
{\sc Most Parsimonious Reconciliation of a Set of Trees with Segmental Duplications ({\sc MPRST-SD})}\\
{\bf Instance:} a species tree $S$, a gene forest $\G$, costs $\delta$ for duplications and $\lambda$ for losses.\\
{\bf Output:} a reconciliation $\alpha$ of $\G$ in $S$ such that  
$cost^{SD}(\G, S, \alpha)$ is minimum.

\vspace{3mm}

Note that, when $\lambda=0$, $cost^{SD}$ coincides with the unconstrained ME score defined in \cite{paszek2017efficient} (where it is called the FHS model).



\subsection{Properties of multi-gene reconciliations}

We finish this section with some additional terminology and general properties of multi-gene reconciliations that will be useful throughout the paper.
The next basic result states that in a reconciliation $\alpha$, 
we should set the events of internal nodes to $\mathbb{S}$ whenever 
it is allowed.

\begin{lemma}\label{lem:no-useless-dups}
Let $\alpha$ be a reconciliation for $\G$ in $S$, and let $u \in V(\G)$ 
such that $\alpha_e(u) = \mathbb{D}$.
Let $\alpha'$ be identical to $\alpha$, with the exception that 
$\alpha'_e(u) = \mathbb{S}$, and suppose that $\alpha'$ satisfies the requirements of Definition~\ref{def:rec}.
Then $cost^{SD}(\alpha') \leq cost^{SD}(\alpha)$.
\end{lemma}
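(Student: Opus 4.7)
The plan is to split the cost into its loss component and its duplication (height) component, and show that the change from $\mathbb{D}$ to $\mathbb{S}$ at $u$ cannot increase either one, assuming losses are counted with a non-negative weight $\lambda$.

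First I would observe that $\alpha$ and $\alpha'$ agree everywhere except on $\alpha_e(u)$; in particular $\alpha_r = \alpha'_r$ and the event at every node other than $u$ is preserved. For the loss count, this means $l_{\alpha'}(v) = l_{\alpha}(v)$ for every internal node $v \neq u$, and from the two cases of the definition of $l_{\alpha}(\cdot)$ we read off directly that $l_{\alpha'}(u) = l_{\alpha}(u) - 2$ (the two distances appearing in the sum are unchanged, only the ``$-2$'' correction from a speciation appears). Moreover the speciation constraint forces $\alpha_r(u_1)$ and $\alpha_r(u_2)$ to be incomparable descendants of $\alpha_r(u)$, so each distance is at least $1$ and $l_{\alpha'}(u) \geq 0$, as it must be. Hence $l(\alpha') = l(\alpha) - 2$.

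Next I would analyse $\dupcost$. Set $s := \alpha_r(u)$. For any species $s' \neq s$ the set $D(\G, \alpha, s')$ is unchanged, so $\G[\alpha', s'] = \G[\alpha, s']$ and thus $h_{\alpha'}(s') = h_{\alpha}(s')$. At the species $s$ itself, $D(\G, \alpha', s) = D(\G, \alpha, s) \setminus \{u\}$, so $\G[\alpha', s]$ is the induced subforest of $\G[\alpha, s]$ obtained by deleting the single vertex $u$ (together with the arcs incident to it). Deleting one vertex from a forest cannot increase its height: every directed root-to-leaf path of $\G[\alpha', s]$ is also a path of $\G[\alpha, s]$, possibly extended by arcs at the top that used to pass through $u$. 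Therefore $h_{\alpha'}(s) \leq h_{\alpha}(s)$, and summing over $V(S)$ gives $\dupcost(\alpha') \leq \dupcost(\alpha)$.

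Finally, combining the two bounds,
\begin{equation*}
cost^{SD}(\alpha') = \delta \cdot \dupcost(\alpha') + \lambda \cdot l(\alpha') \leq \delta \cdot \dupcost(\alpha) + \lambda \cdot (l(\alpha) - 2) \leq cost^{SD}(\alpha),
\end{equation*}
using $\lambda \geq 0$. The only step that needs a small amount of care is the height comparison, since one might worry that removing $u$ could ``split'' a path or otherwise expose a longer one; a one-line argument that any root-to-leaf path in the deleted forest lifts to a path in the original forest closes this gap, and is really the only non-bookkeeping part of the proof.
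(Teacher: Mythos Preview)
Your proof is correct and follows essentially the same approach as the paper: both argue that the duplication-height sum cannot increase (you give the explicit ``deleting a vertex from a forest cannot increase its height'' justification, which the paper leaves implicit) and that the loss count drops by exactly $2$ because the distances are unchanged and only the $-2$ correction for speciations appears. Your version is simply a more detailed rendering of the paper's terse argument, including the sanity check that $l_{\alpha'}(u) \geq 0$.
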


\begin{proof}
Observe that changing $\alpha_e(u)$ from $\mathbb{D}$ to $\mathbb{S}$ cannot increase $\dupcost(\alpha)$.
Moreover, as $dist(\alpha'_r(u), \alpha'_r(u_1))$ and $dist(\alpha'_r(u), \alpha'_r(u_2))$ are the same as in $\alpha$ for the two children $u_1$ and $u_2$ of $u$, by definition of duplications and losses
this decreases the number of losses by $2$.  Thus $cost^{SD}(\alpha') \leq cost^{SD}(\alpha)$, and this inequality is strict when $\lambda > 0$.
\end{proof}

Since we are looking for a most parsimonious reconciliation, by Lemma~\ref{lem:no-useless-dups} we may 
assume that for an internal node $u \in V(\G)$, 
$\alpha_e(u) = \mathbb{S}$ whenever allowed, and $\alpha_e(u) = \mathbb{D}$ otherwise.
Therefore, $\alpha_e(u)$ is implicitly defined by the $\alpha_r$ mapping.
To alleviate notation, we will treat $\alpha$ as simply as a mapping from $V(\G)$ to $V(S)$ and thus write $\alpha(u)$ instead of $\alpha_r(u)$.  We will assume that the events $\alpha_e(u)$ can be 
deduced from this mapping $\alpha$ by Lemma~\ref{lem:no-useless-dups}.

Therefore, treating $\map$ as a mapping, 
we will say that $\map$ is \emph{valid} if for every $v \in V(\G)$, $\map(v) \geq \map(v')$ for all descendants $v'$ of $v$.
We denote by $\map[v \rightarrow s]$ the mapping obtained from $\map$ by remapping $v \in V(\G)$ to $s \in V(S)$, i.e.
$\map[v \rightarrow s](w) = \map(w)$ for every $w \in V(\G) \setminus \{v\}$, and $\map[v \rightarrow s](v) = s$.
Since we are assuming that $\mathbb{S}$ and $\mathbb{D}$ events can be deduced from $\alpha$, 
the LCA-mapping is now unique: we denote by $\lcamap : V(\G) \rightarrow V(S)$ the LCA-mapping, defined as 
$\lcamap(v) = s(v)$ if $v \in L(\G)$, and otherwise $\lcamap(v) = LCA_{S}( \lcamap(v_1), \lcamap(v_2) )$, 
where $v_1$ and $v_2$ are the children of $v$.  
Note that for any valid reconciliation $\alpha$, we have $\map(v) \geq \lcamap(v)$ for all $v \in V(\G)$.  We also have the following, which will be useful to establish our results.

\begin{lemma}\label{lem:remap-dups}
Let $\map$ be a mapping from $\G$ to $S$.  If $\map(v) > \lcamap(v)$, then $v$ is a $\D$ node under $\map$.
\end{lemma}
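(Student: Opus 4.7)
The plan is to prove the contrapositive: assume $v$ is not a duplication under $\map$ and show that $\map(v) = \lcamap(v)$. Since the events are deduced from $\map$ via the convention following Lemma~\ref{lem:no-useless-dups}, a non-duplication internal node must be a speciation. I would first dispose of the leaf case: if $v \in L(\G)$, Definition~\ref{def:rec}(1) forces $\map(v) = s(v) = \lcamap(v)$, so the hypothesis $\map(v) > \lcamap(v)$ fails vacuously. Hence we may assume $v$ is internal with children $v_1, v_2$.

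Next, I would assume $v$ is a speciation under $\map$, which by Definition~\ref{def:rec}(2) means $\map(v) = LCA_S(\map(v_1), \map(v_2))$ and $\map(v_1), \map(v_2)$ are incomparable in $S$. Using the validity of $\map$ as stated in the paragraph preceding the lemma, for each child $v_i$ we have $\map(v_i) \geq \lcamap(v_i)$, i.e.\ $\lcamap(v_i)$ lies in the subtree of $S$ rooted at $\map(v_i)$.

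The key step is an LCA identity in $S$: whenever $x_1, x_2 \in V(S)$ are incomparable, then for any $y_1, y_2$ with $y_i \leq x_i$, we have $LCA_S(y_1, y_2) = LCA_S(x_1, x_2)$. This holds because the subtrees of $S$ rooted at $x_1$ and $x_2$ are disjoint, so any common ancestor of $y_1$ and $y_2$ must be a common ancestor of $x_1$ and $x_2$, while conversely $LCA_S(x_1, x_2)$ is itself a common ancestor of $y_1, y_2$. Applying this with $x_i = \map(v_i)$ and $y_i = \lcamap(v_i)$ yields
\[
\lcamap(v) \;=\; LCA_S(\lcamap(v_1), \lcamap(v_2)) \;=\; LCA_S(\map(v_1), \map(v_2)) \;=\; \map(v),
\]
contradicting $\map(v) > \lcamap(v)$. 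Hence $v$ must be a duplication node.

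There is essentially no hard obstacle here; the only subtle ingredient is the elementary LCA identity in the incomparable-subtree case, and the mild care needed to invoke the conventions (validity of $\map$ and the implicit event rule from Lemma~\ref{lem:no-useless-dups}) so that the dichotomy $\S$ vs.\ $\D$ is well-defined.
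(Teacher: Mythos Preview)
Your proof is correct and follows essentially the same approach as the paper: both arguments reduce to the observation that when $\map(v_1)$ and $\map(v_2)$ are incomparable and $\lcamap(v_i) \leq \map(v_i)$, one has $LCA_S(\lcamap(v_1),\lcamap(v_2)) = LCA_S(\map(v_1),\map(v_2))$. The paper reaches this by contradiction (arguing that otherwise some $\lcamap(v_i)$ would be incomparable with $\lcamap(v)$), whereas you package the same tree fact as an explicit LCA identity and argue the contrapositive; the underlying combinatorics is identical.
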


\begin{proof}
Let $v_1$ and $v_2$ be the two children of $v$.
If $\map(v) \neq LCA_{S}(\map(v_1), \map(v_2))$, then $v$ must be a duplication, by the definition 
of $\S$ events.  The same holds if $\map(v_1)$ and $\map(v_2)$ are not incomparable. 
Thus assume $\map(v) = LCA_{S}(\map(v_1), \map(v_2)) > \lcamap(v)$
and that $\map(v_1)$ and $\map(v_2)$ are incomparable.
This implies that one of $\map(v_1)$ or $\map(v_2)$ 
is incomparable with $\lcamap(v)$, say $\map(v_1)$ w.l.o.g.. 
But $\lcamap(v_1) \leq \map(v_1)$, implying that $\lcamap(v_1)$ 
is also incomparable with $\lcamap(v)$, a contradiction to the definition of $\lcamap = LCA_S(\lcamap(v_1), \lcamap(v_2))$.
\end{proof}

\begin{lemma}\label{lem:dups-from-below}
Let $\map$ be a mapping from $\G$ to $S$, and let $v \in V(\G)$.
Suppose that there is some proper descendant $v'$ of $v$ such that 
$\map(v') \geq \lcamap(v)$.  Then $v$ is a duplication under $\map$.
\end{lemma}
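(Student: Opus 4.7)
The plan is to reduce the statement to Lemma~\ref{lem:remap-dups} whenever possible, and to handle the remaining boundary case by a direct contradiction with the speciation requirement in Definition~\ref{def:rec}. By validity of $\map$, we have $\map(v) \geq \map(v')$ for every descendant $v'$ of $v$; combined with the hypothesis $\map(v') \geq \lcamap(v)$ we obtain $\map(v) \geq \lcamap(v)$. This naturally splits into two cases according to whether this inequality is strict or tight.

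If $\map(v) > \lcamap(v)$, then Lemma~\ref{lem:remap-dups} applies directly and yields that $v$ is a $\D$ node under $\map$, and we are done.

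Otherwise $\map(v) = \lcamap(v)$, and the chain of inequalities forces $\map(v') = \map(v)$. Let $w$ be the child of $v$ that lies on the path from $v$ to $v'$ (taking $w = v'$ if $v'$ is itself a child of $v$). Validity applied along the subpath from $w$ down to $v'$ gives $\map(v) \geq \map(w) \geq \map(v') = \map(v)$, so $\map(w) = \map(v)$. Let $w'$ be the other child of $v$. Validity at $v$ then gives $\map(w') \leq \map(v) = \map(w)$, so the images of the two children of $v$ are comparable. This violates the incomparability condition required for an $\S$ event in Definition~\ref{def:rec}, hence $v$ must be a $\D$ node under $\map$.

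The only mildly delicate step is the second case: one has to resist the temptation to invoke Lemma~\ref{lem:remap-dups} (it requires a strict inequality) and instead propagate the equality $\map(v') = \map(v)$ up to the relevant child of $v$, then read off the comparability of the two children's images. Everything else is just unwinding validity and the definition of a speciation.
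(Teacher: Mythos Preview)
Your proof is correct and follows essentially the same approach as the paper: both split on whether $\map(v) > \lcamap(v)$ (invoking Lemma~\ref{lem:remap-dups}) or $\map(v) = \lcamap(v)$ (propagating the equality along the path from $v'$ up to a child of $v$ to contradict the speciation condition). Your treatment of the equality case is slightly more explicit in spelling out why the two children's images end up comparable, but the argument is the same.
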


\begin{proof}
If $\map(v) = \lcamap(v)$, we get $\lcamap(v) \leq \map(v') \leq \map(v) = \lcamap(v)$, and so $\map(v') = \lcamap(v)$. 
We must then have $\map(v'') = \lcamap(v)$ for every node $v''$ on the path between $v'$ and $v$.
In particular, $v$ has a child $v_1$ with $\map(v) = \map(v_1)$ and thus $v$ is a duplication.  If instead $\map(v) > \lcamap(v)$, then $v$ is a duplication by Lemma~\ref{lem:remap-dups}.
\end{proof}

The \emph{Shift-down lemma} will prove very useful to argue that we should shift mappings of duplications down when possible, as it saves losses (see Figure~\ref{fig:shift-down}).
For future reference, do note however that this may increase the height of some duplication forest $\G[\alpha, s]$.

\begin{lemma}[Shift-down lemma]\label{lem:shift-down}
Let $\alpha$ be a mapping from $\G$ to $S$, let $v \in V(\G)$, let $s \in V(S)$ and $k > 0$ be such that $par^k(s) = \map(v)$.  Suppose that 
$\map[v \rightarrow s]$ is a valid mapping.
Then $l(\map[v \rightarrow s]) \leq l(\map) - k$.
\end{lemma}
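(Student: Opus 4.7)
The plan is to account for every loss contribution that changes when $v$'s mapping slides from $par^k(s)$ down to $s$. Set $\alpha' := \alpha[v \to s]$, let $p = par(v)$, and let $p'$ denote the sibling of $v$ (if $v$ is not a root of $\G$). Since $\alpha$ and $\alpha'$ differ only at $v$, only $l_\alpha(v)$ and $l_\alpha(p)$ can change, so I only need to bound the sum of these two differences. I will assume $v$ is an internal node; otherwise the shift is degenerate as leaves are pinned by $s(\cdot)$.

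Step 1 (distances). Let $v_1, v_2$ be the children of $v$. Validity of $\alpha'$ forces $\alpha(v_i) \leq s$ for $i=1,2$, and combined with $s \leq par^k(s) = \alpha(v)$ this gives $dist(par^k(s), \alpha(v_i)) = k + dist(s, \alpha(v_i))$. Thus each child-distance contribution to $l(v)$ drops by exactly $k$, for a total of $-2k$. Symmetrically, validity of $\alpha$ yields $\alpha(p) \geq par^k(s) \geq s$, so $dist(\alpha(p), s) = dist(\alpha(p), par^k(s)) + k$, meaning the $p$-to-$v$ distance rises by $k$. Net distance change: $-k$ (and $-2k$ in the easier case where $v$ has no parent).

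Step 2 (events). I would then argue that event flips can only further decrease losses. At $v$: having $\alpha_e(v) = \S$ would force $\alpha(v) = LCA_S(\alpha(v_1), \alpha(v_2)) \leq s$, contradicting $\alpha(v) = par^k(s)$ with $k > 0$. So $\alpha_e(v) = \D$ originally; after the shift, $\alpha'_e(v)$ stays $\D$ (neutral) or becomes $\S$ (saving 2 losses). At $p$: the only delicate case is $\alpha_e(p) = \S$, so $\alpha(p) = LCA_S(par^k(s), \alpha(p'))$ with $par^k(s)$ and $\alpha(p')$ incomparable. I would verify that $s$ and $\alpha(p')$ remain incomparable ($\alpha(p') \leq s$ would give $\alpha(p') \leq par^k(s)$; $s \leq \alpha(p')$ would place both $par^k(s)$ and $\alpha(p')$ on the ancestral chain of $s$, forcing them comparable), and that no node of $\{s, par(s), \ldots, par^{k-1}(s)\}$ can be an ancestor of $\alpha(p')$ (else $\alpha(p') \leq par^k(s)$), so the common ancestors of $\{s, \alpha(p')\}$ coincide with those of $\{par^k(s), \alpha(p')\}$. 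Hence $LCA_S(s, \alpha(p')) = \alpha(p)$ and $\alpha'_e(p) = \S$. If instead $\alpha_e(p) = \D$ originally, the event can only stay $\D$ or flip to $\S$, still non-increasing.

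Combining Steps 1 and 2 gives $l(\alpha') - l(\alpha) \leq -k$, as required. The main subtle step is the LCA-preservation argument at $p$: ruling out a silent $\S$-to-$\D$ degradation that would otherwise add $2$ to the loss count and spoil the bound. Everything else is routine bookkeeping on distances in $S$.
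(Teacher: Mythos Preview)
Your proof is correct and follows the same bookkeeping strategy as the paper: track the change in loss contributions at $v$ and at $par(v)$, observing that the child-side distances each drop by $k$ while the parent-side distance rises by $k$. The paper's version is terser and simply asserts that at most $k$ new losses appear on the $(par(v),v)$ branch; it does not explicitly verify that the event at $par(v)$ cannot degrade from $\S$ to $\D$ (which, under the convention that events are deduced from the mapping, would add $2$ to the loss count). Your Step~2 supplies exactly that missing justification via the LCA-preservation argument, so your write-up is in fact more complete on this point, though the underlying idea is identical.
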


\begin{proof}

Let $v_1$ and $v_2$ be the children of $v$, 
and denote $t := \map(v), t_1 := \map(v_1)$ and $t_2 := \map(v_2)$.
Moreover denote $\map' := \map[v \rightarrow s]$.
Let $P$ be the set of nodes that appear on the path between $s$ and $t$, excluding $s$ but including $t$
(note that $s$ is a proper descendant of $t$ but an ancestor of both $t_1$ and $t_2$, by the validity of $\map'$). For instance in Figure~\ref{fig:shift-down}, $P = \{x,t\}$.
Observe that $|P| = k$.
Under $\map$, there is a loss for each node of $P$ on both the $(v, v_1)$ and $(v, v_2)$ branches.
(noting that $v$ is a duplication by Lemma~\ref{lem:remap-dups}).
These $2k$ losses are not present under $\map'$.  On the other hand, there are at most $k$ losses
that are present under $\map'$ but not under $\map$, which consist of one loss for each node of $P$
on the $(par(v), v)$ branch (in the case that $v$ is not the root of its tree - otherwise, 
no such loss occurs).  This proves that $l(\map') \leq l(\map) - k$.
\end{proof}

An illustration of the  Shift-down lemma can be found in Figure~\ref{fig:shift-down}.  
\begin{figure}[H]
\begin{center}
\includegraphics[width=.55\linewidth]{./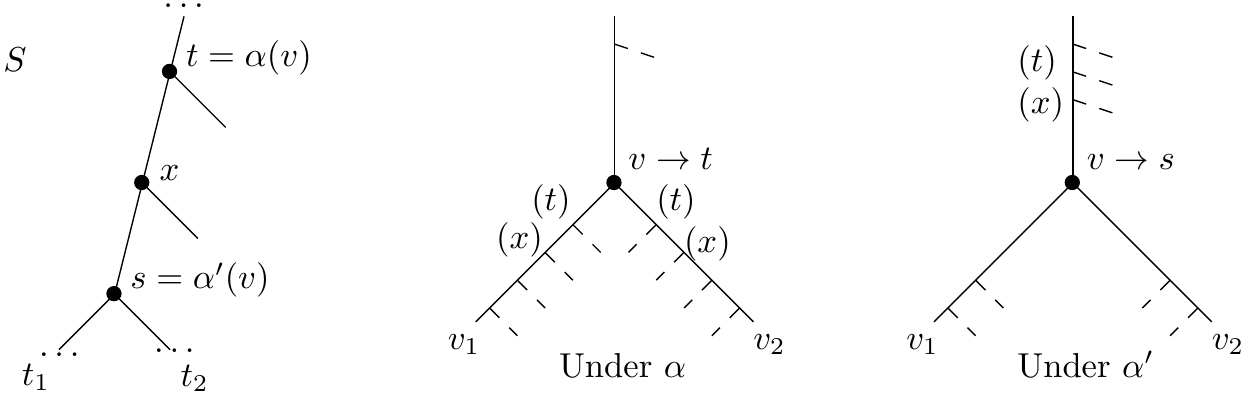}
\caption{The Shift-down lemma in action.  Here $t = par^2(s)$, and remapping $v$ from $t$ to $s$ saves $2$ losses --- $4$ losses are saved below $v$ and $2$ are added above. 
\label{fig:shift-down}}
\end{center}
\end{figure}

\section{The computational complexity of the {\sc MPRST-SD} problem}

We separate the study of the complexity of the {\sc MPRST-SD} problem into two subcases: when 
$\lambda \geq \delta$ and when $\lambda < \delta$. 

\subsection{The case of $\lambda \geq \delta$.}

The following theorem states that, when $\lambda \geq \delta$, the MPR (ie the LCA-mapping) is a solution to the {\sc MPRST-SD} problem.

\begin{theorem}
Let $\G$ and $S$ be an instance of {\sc MPRST-SD}, and suppose that $\lambda \geq \delta$,  Then the LCA-mapping $\lcamap$ is a reconciliation of minimum cost for $\G$ and $S$. 
Moreover if $\lambda > \delta$, $\lcamap$ is the unique 
reconciliation of minimum cost.
\end{theorem}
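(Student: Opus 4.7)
The plan is to transform an arbitrary valid reconciliation $\alpha$ of $\G$ into the LCA-mapping $\lcamap$ via a sequence of local remappings, each of which does not increase the cost (and strictly decreases it when $\lambda > \delta$). Enumerate the nodes of $\G$ in post-order as $v_1, \ldots, v_n$ and build a sequence $\alpha = \alpha^{(0)}, \alpha^{(1)}, \ldots, \alpha^{(n)}$ where $\alpha^{(i)}$ is obtained from $\alpha^{(i-1)}$ by setting $\alpha^{(i)}(v_i) := \lcamap(v_i)$. Validity of each intermediate mapping follows from the traversal order: every proper descendant of $v_i$ has already been pushed down to its LCA image, which lies below $\lcamap(v_i)$, and $par(v_i)$ still sits above $\alpha^{(i-1)}(v_i) \geq \lcamap(v_i)$ by the standing fact that $\map(v) \geq \lcamap(v)$ for any valid $\map$. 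By construction $\alpha^{(n)} = \lcamap$.

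At step $i$, let $k_i = dist(\lcamap(v_i), \alpha^{(i-1)}(v_i))$. If $k_i = 0$ nothing changes. Otherwise Lemma~\ref{lem:remap-dups} guarantees that $v_i$ is a duplication under $\alpha^{(i-1)}$, so the Shift-down lemma (Lemma~\ref{lem:shift-down}) gives $l(\alpha^{(i)}) \leq l(\alpha^{(i-1)}) - k_i$. I would then establish that $\dupcost(\alpha^{(i)}) \leq \dupcost(\alpha^{(i-1)}) + 1$. First, only the events of $v_i$ and $par(v_i)$ can change, since all other events depend on mappings that were untouched. A case analysis shows that the event of $par(v_i)$ can only shift from $\D$ to $\S$ (never the reverse), because moving $v_i$ down within the same subtree of $\alpha(par(v_i))$ preserves both the incomparability of the sibling images and the location of their LCA; such a flip only removes $par(v_i)$ from a duplication forest. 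Thus the only way $\dupcost$ can grow is by insertion of $v_i$ into $D(\G, \alpha^{(i)}, \lcamap(v_i))$. Since $par(v_i)$ maps strictly above $\lcamap(v_i)$, $v_i$ necessarily enters the induced subforest as a new root whose potential children are only its direct $\G$-children in the set, so the height grows by at most $1$.

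Combining these two estimates, the cost changes by at most $\delta \cdot 1 - \lambda \cdot k_i = \delta - k_i \lambda \leq \delta - \lambda \leq 0$ whenever $\lambda \geq \delta$ and $k_i \geq 1$. Summing over the whole sequence yields $cost^{SD}(\lcamap) \leq cost^{SD}(\alpha)$, as required. For the uniqueness claim under $\lambda > \delta$, if $\alpha \neq \lcamap$ then at least one step has $k_i \geq 1$, and in that step the bound becomes $\delta - \lambda < 0$, forcing $cost^{SD}(\lcamap) < cost^{SD}(\alpha)$; since $\lcamap$ is unique, it is the unique minimizer.

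The main obstacle is the $+1$ bound on $\dupcost$ per remapping. Two subtleties must be handled carefully: verifying that $par(v_i)$'s event cannot flip from $\S$ to $\D$ under the shift (which requires tracking how the LCA of the two siblings' images evolves as $v_i$ descends inside a fixed subtree of $\alpha(par(v_i))$), and checking that even when several proper descendants of $v_i$ already populate $D(\G, \cdot, \lcamap(v_i))$ as roots of tall chains, inserting $v_i$ above only those chains rooted at its direct $\G$-children raises the induced-subforest height by at most one. Once these two points are granted, the rest of the argument is the clean arithmetic above.
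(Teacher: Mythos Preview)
Your proposal is correct and follows essentially the same approach as the paper: pick a bottommost node $v$ with $\alpha(v)\neq\lcamap(v)$, remap it to $\lcamap(v)$, invoke the Shift-down lemma to save at least one loss, and observe that $\dupcost$ rises by at most one; iterate until $\lcamap$ is reached. The paper simply asserts the ``$+1$'' bound on $\dupcost$ without argument, whereas you spell out the two necessary checks (that $par(v_i)$ cannot flip $\S\to\D$, and that $v_i$ enters $\G[\cdot,\lcamap(v_i)]$ as a root attached only to its direct $\G$-children), so your write-up is in fact more complete on that point.
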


\begin{proof}
Let $\map$ be a mapping of $\G$ into $S$ of minimum cost. 
Let $v \in V(\G)$ be a minimal node of $\G$ with the property that $\map(v) \neq \lcamap(v)$ (i.e. all proper descendants $v'$ of $v$ satisfy $\map(v) = \lcamap(v)$).
Note that $v$ must exists since, for every leaf $l \in \L(\G)$, 
we have $\map(l) = \lcamap(l)$.
Because $\map(v) \geq \lcamap(v)$, it follows that $\map(v) > \lcamap(v)$.  Denote $s = \lcamap(v)$ and $t = \map(v)$.  
Then there is some $k \geq 1$ such that $t = par^k(s)$.
Consider the mapping $\map' = \map[v \rightarrow s]$.
This possibly increases the sum of duplications by $1$, so that 
$\dupcost(\map') \leq \dupcost(\map) + 1$.
But by the Shift-down lemma, $l(\map') \leq l(\map) - 1$. 
Thus we have at most one duplication but save at least one loss.

If $\lambda > \delta$, this contradicts the optimality of $\map$, implying that $v$ cannot exist and thus that $\map = \lcamap$.  This proves the uniqueness of $\lcamap$ in this case. 

If $\delta = \lambda$, then $\delta \dupcost(\map') + \lambda l(\map') \leq \delta \dupcost(\map) + \lambda l(\map)$.  By applying the above transformation successively on the minimal nodes $v$ that are not mapped to $\lcamap(v)$, we eventually reach the LCA-mapping $\lcamap$ with an equal or better cost than $\map$.
\end{proof}
\subsection{The case of $\delta > \lambda$.}

We show that, in contrast with the $\lambda \geq \delta$ case, the {\sc MPRST-SD} problem 
is NP-hard when $\delta > \lambda$ and the costs are given as part of the input.
More specifically, we show that the problem is NP-hard when one only wants to minimize the sum of duplication heights, i.e. $\lambda = 0$.  Note that if $\lambda > 0$ but 
is small enough, the effect will be the same and the hardness result also holds --- for instance, putting $\delta = 1$ and say $\lambda < \frac{1}{2|V(\G)||V(S)|}$ ensures that even if a maximum number of losses appears on every branch of $\G$, it does not even amount to the cost of one duplication.
The hardness proof is quite technical, and we refer the interested reader to the Appendix for the details.

We briefly outline the main ideas of the reduction.  The reduction is from the Vertex Cover problem, where we are given a graph $G$ and must find a subset of vertices $V' \subseteq V(G)$ of minimum size such that each edge has at least one endpoint in $V'$.  The species tree $S$ and the forest $\G$ are constructed so that, for each vertex $v_i \in V(G)$, there is a gene tree $A_i$ in $\G$ with a long path of duplications, all of which could either be mapped to a species called $y_i$ or another species $z_i$.  We make it so that mapping to $y_i$ introduces one more duplication than mapping to $z_i$, hence we have to ``pay'' for each $y_i$.
We also have a gene tree $C_h$ in $\G$ for each edge $e_h \in E(G)$, with say $v_i$ and $v_j$ being the endpoints of $e_h$.  In $C_h$, there is a large set of duplications $\mathcal{D}$ under the LCA-mapping $\lcamap$.
We make it so that if we either mapped the duplications in $A_i$ or $A_j$ to $y_i$ or $y_j$, respectively, then we may map all the $\mathcal{D}$ 
nodes to $y_i$ or $y_j$ without adding more duplications.  However if we did not choose $y_i$ nor $y_j$, then it will not be possible to remap the $\mathcal{D}$ nodes without incurring a large duplication cost.  Therefore, the goal becomes to choose a minimum number of $y_i$'s from the $A_i$ trees so that for each edge $e_h = \{v_i, v_j\}$, one of $y_i$ or $y_j$ is chosen for the tree $C_h$.  This establishes the correspondence with the vertex cover instance.

\begin{theorem}\label{thm:nphard}
The {\sc MPRST-SD} problem is NP-hard for $\lambda = 0$ and for given $\delta > \lambda$.
\end{theorem}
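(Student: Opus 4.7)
The plan is to reduce from \textsc{Vertex Cover}. Given an instance $(G=(V,E), k)$, I will construct in polynomial time a species tree $S$, a gene forest $\G$, and a threshold $B = B(G,k)$ so that $G$ has a vertex cover of size at most $k$ iff $(\G, S)$ admits a reconciliation with $\dupcost(\alpha) \leq B$; setting $\lambda = 0$ makes the cost equal to $\delta \cdot \dupcost$ and immediately yields the desired MPRST-SD instance. The correspondence is exactly the one sketched before the statement: each vertex $v_i \in V$ is represented by a ``choice'' gene tree $A_i$, each edge $e_h \in E$ by a ``constraint'' gene tree $C_h$, and a vertex cover of $G$ corresponds to the set of $A_i$'s whose duplications have been lifted from $z_i$ to $y_i$.

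To build $S$, I would concatenate, for each $v_i \in V$, a local gadget containing two internal species $y_i$ and $z_i$ with $z_i$ a proper descendant of $y_i$ and sufficiently many leaves below $z_i$ to anchor the LCA-mappings, arranged globally so that $r_{ij} := LCA_S(y_i, y_j)$ is a fixed node distinct from every $y_k$. Each $A_i$ is a caterpillar whose internal nodes are all duplications under $\lcamap$ and whose leaf labels force $\lcamap$ to stack the whole path at $z_i$; the gadget is engineered so that the only competitive alternative valid mapping is to lift the entire stack to $y_i$ at the price of exactly one extra unit of duplication height at $y_i$ (which can be arranged, e.g., by adding a single duplication at the top of $A_i$ whose two subtrees become incomparable only once lifted). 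Each $C_h$, for $e_h = \{v_i,v_j\}$, is a tall stack of $M$ forced duplications with leaves labeled so that $\lcamap$ places it at $r_{ij}$, where $M$ is polynomial but larger than $|V|+|E|$; the labels are also chosen so that the whole stack admits valid remappings entirely inside $y_i$ or entirely inside $y_j$, and so that the nodes of $C_h$ slot into an already-lifted $A_i$-stack (respectively $A_j$-stack) without increasing its height. Setting $B := \dupcost(\lcamap) + k - M\cdot|E|$ captures ``$k$ lifts, all edges absorbed''.

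The forward direction is routine: from a vertex cover $V'$ of size $k$, lift each $A_i$ with $v_i \in V'$ to $y_i$ (paying $k$), then for every edge $e_h = \{v_i,v_j\}$ pick a covering endpoint, say $v_i \in V'$, and remap $C_h$'s entire stack into $y_i$, where by construction it is absorbed without increasing $h_\alpha(y_i)$; summing yields $\dupcost(\alpha) \leq B$.

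The converse is the heart of the proof and the step I expect to be the main obstacle. Given any reconciliation $\alpha$ with $\dupcost(\alpha) \leq B$, I would first canonicalize it by repeated application of Lemma~\ref{lem:no-useless-dups}, the Shift-down lemma (Lemma~\ref{lem:shift-down}), and Lemma~\ref{lem:dups-from-below}, reducing to the case where each $A_i$ is entirely at $z_i$ or entirely lifted to $y_i$, and each $C_h$ is entirely at $r_{ij}$, entirely inside $y_i$, or entirely inside $y_j$. The difficulty is to prove that any deviation from these canonical placements strictly increases $\dupcost$; this requires a careful case analysis of how duplications from distinct gene trees can (or cannot) share height within a common species, and uses the size $M$ of the $C_h$-stacks as the decisive sledgehammer against attempts to cheat by scattering duplications across non-canonical species. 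Once $\alpha$ is in canonical form, the set $V' := \{\, v_i : A_i \text{ is lifted to } y_i \,\}$ must cover every edge, since an uncovered $e_h = \{v_i,v_j\}$ would force $C_h$ to remain at $r_{ij}$ and contribute its full height $M$, exceeding $B$; the budget then forces $|V'| \leq k$. Finally, replacing $\lambda = 0$ with any $\lambda < \delta / (2 |V(\G)||V(S)|)$ bounds the total loss cost of any reconciliation below the cost of a single duplication, so the same reduction shows hardness for every given ratio $\delta > \lambda$.
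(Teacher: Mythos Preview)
Your high-level plan (reduce from \textsc{Vertex Cover}, one choice gadget $A_i$ per vertex, one constraint gadget $C_h$ per edge) matches the paper's, but the $C_h$ gadget you describe cannot work. You place the $M$-stack of forced duplications of $C_h$ at $r_{ij} := LCA_S(y_i, y_j)$ under $\lcamap$ and then propose to ``remap entirely inside $y_i$'' or ``entirely inside $y_j$'' so that an already-lifted $A$-stack absorbs it. But $y_i$ and $y_j$ are proper descendants of $r_{ij}$, and every valid reconciliation satisfies $\alpha(v) \geq \lcamap(v)$ for all $v$ (this is stated right after the definition of $\lcamap$ in Section~\ref{sec:rec} and follows at once from Definition~\ref{def:rec}). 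Hence no node of $C_h$ whose LCA-mapping is $r_{ij}$ can ever be sent to $y_i$ or $y_j$: your absorption step moves duplications \emph{downwards}, which is simply forbidden. The forward direction and the budget $B = \dupcost(\lcamap) + k - M\cdot|E|$ collapse with it, and there is no symmetric fix, since incomparable $y_i,y_j$ share no common descendant in $S$.

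The paper gets around this by breaking the $i/j$ symmetry. For $e_h = \{v_i, v_j\}$ with $i<j$, the $C_h$ duplications are placed far \emph{below} $y_i$ (in a dedicated subtree $P_i$ of $S$), so they can only be lifted \emph{up}: either to $y_i$, or further up through $z_i, y_{i+1}, z_{i+1}, \ldots$ all the way to $y_j$. To guarantee that ``lift to $y_i$'' and ``lift to $y_j$'' are the only cheap options, the paper grafts a calibrated number of extra leaves onto each branch of $S$; these are speciations under $\lcamap$ but become long chains of forced duplications (via Lemma~\ref{lem:dups-from-below}) the moment a lift crosses them, so overshooting $y_j$ or stopping at the wrong intermediate species explodes the cost. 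The paper also needs a third family of trees $B_i$ to make each chosen vertex cost exactly one extra unit; your one-line mechanism for this (``a single duplication at the top of $A_i$ whose two subtrees become incomparable only once lifted'') is asserted rather than constructed, and it is the $B_i$ gadget that actually enforces this in the paper. Finally, your canonicalization plan for the converse is in the right spirit, but once $C_h$ lives below $y_i$ the argument is no longer a Shift-down exchange: one has to track where each of the bottom duplications of $C_h$ can land and then count duplication heights across disjoint regions of $S$, which is the substantive content of the paper's converse direction.
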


The above hardness supposes that $\delta$ and $\lambda$ can be arbitrarily far apart.
This leaves open the question of whether MPRST-SD is NP-hard when $\delta$ and $\lambda$ are fixed constants --- in particular when $\delta = 1 + \epsilon$ and $\lambda = 1$, where $\epsilon < 1$ is some very small constant.
We end this section by showing that the above hardness result persists 
even if only one gene tree is given.
The idea is to reduce from the MPRST-SD show hard just above.  Given a species tree $S$ and a gene forest $\G$, we make $\G$ a single tree by incorporating a large  number of speciations (under $\lcamap$) above the root of each tree of $\G$ (modifying $S$ accordingly), then successively joining the roots of two trees of $\G$ under a common parent until $\G$ has only one tree.  

\begin{theorem}\label{thm:nphard-onetree}
The {\sc MPRST-SD} problem is NP-hard for $\lambda = 0$ and for given $\delta > \lambda$, even if only one gene tree is given as input.
\end{theorem}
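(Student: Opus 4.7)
The plan is to reduce from the multi-tree {\sc MPRST-SD} problem with $\lambda = 0$ (shown NP-hard in Theorem \ref{thm:nphard}). Given an instance $(\G, S, \delta)$ with $\G = \{G_1, \ldots, G_k\}$, I build a single-tree instance $(G', S', \delta)$ whose optimal cost equals that of $(\G, S)$, transferring the NP-hardness.

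\emph{Construction.} I extend $S$ to $S'$ by appending fresh auxiliary species so that $r(S)$ becomes a proper subtree of $S'$. The gene tree $G'$ is assembled in two phases. First, above each root $r(G_i)$ of a tree in $\G$, I attach a chain of $N$ new internal nodes, where $N$ is a sufficiently large polynomial in the input size. Each new node has two children: the node directly below it in the chain, and a fresh gene leaf mapped to a distinct fresh species of $S'$, placed so that the new node is a speciation under $\lcamap$ (this is arranged by putting the fresh species in a subtree of $S'$ incomparable with the image under $\lcamap$ of the node immediately below). Second, once every $G_i$ has been ``capped'' by its separator chain, I iteratively join pairs of the augmented trees under a common parent, again using freshly-introduced leaves and species to force each joining node to be a speciation under $\lcamap$. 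After $k-1$ joins, the resulting tree $G'$ is a single binary gene tree containing every $G_i$ as a subtree.

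\emph{Correctness.} The forward direction is immediate: from optimal reconciliations of the $G_i$'s into $S$ with total cost $C$, extending by $\lcamap$ on the added internal nodes gives a reconciliation of $G'$ into $S'$ of cost exactly $C$, since the added nodes are all speciations and the fresh species host no other duplications. For the converse, given an arbitrary reconciliation $\map$ of $G'$, I restrict $\map$ to each $G_i$ to obtain $\map_i$ and claim $\sum_i cost^{SD}(\map_i) \leq cost^{SD}(\map)$; the key point is that duplications from distinct $G_i$'s and from the join structure end up in pairwise disjoint duplication forests of $S'$.

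The main obstacle is proving this last claim. A priori, $\map$ might push some node of $G_i$ upward, past its separator chain, into a species shared with another $G_j$ or with the join structure, consolidating several forests into one of smaller total height and ``cheating'' the sum. The purpose of the large parameter $N$ is exactly to prevent this: by Lemmas \ref{lem:remap-dups} and \ref{lem:dups-from-below}, remapping any $G_i$-node past its separator chain forces a cascade of separator nodes to become duplications in the same species, each contributing to the height of that forest; summing along the chain yields at least $N$ extra units of duplication-height, which, for $N$ chosen polynomially larger than the worst-case gain (say $N > |V(\G)|$), strictly dominates any possible savings. Hence in any optimal $\map$, each $G_i$'s mapping stays within $S$, the duplication forests of different $G_i$'s remain disjoint, and their heights add up, giving the desired equality of optima and inheriting the NP-hardness.
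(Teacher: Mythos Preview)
Your overall strategy --- reduce from the multi-tree problem, cap each $G_i$ with a long chain of speciations, then glue the capped trees into one --- is exactly the paper's approach. However, there is a genuine gap in your construction.

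You claim the $k-1$ join nodes can be forced to be speciations under $\lcamap$ by introducing fresh leaves and species. This is impossible. Each augmented tree $T_i$ contains $G_i$, all of whose leaves map into $S$, and also contains at least one fresh leaf mapping into $S'\setminus S$. Hence $\lcamap(r(T_i))$ is a common ancestor in $S'$ of a leaf inside $S$ and a leaf outside $S$, which forces $\lcamap(r(T_i)) > r(S)$. But the proper ancestors of $r(S)$ in $S'$ all lie on the single path from $r(S)$ to $r(S')$ and are therefore pairwise comparable. So when you place two augmented trees $T_i, T_j$ under a common parent $p$, the two children of $p$ have comparable $\lcamap$-images, and $p$ must be a duplication, not a speciation. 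Your forward-direction claim that the cost is exactly $C$ therefore fails. (In fact, in the hard instances produced in Theorem~\ref{thm:nphard} every gene tree already has $\lcamap$ of its root equal to $r(S)$, so even without the separator chains the join nodes are forced duplications.)

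The paper's remedy is to accept that the $k-1$ join nodes are duplications, place them all at $r(S')$ via a caterpillar join (so they contribute a single chain of height $k-1$), and prove the shifted equivalence $cost^{SD}(\G,S)\le t \iff cost^{SD}(G',S')\le t+k-1$. The converse direction then requires one more step that your sketch omits: the separator-chain argument only rules out pushing a $G_i$-node all the way to $r(S')$, but such a node could still land strictly between $r(S)$ and $r(S')$, so the restricted mapping $\map_i$ need not map into $S$. The paper handles this explicitly by remapping every $G_i$-node that lands in $V(S')\setminus V(S)$ down to $r(S)$ and verifying, via a height-sum inequality, that this cannot increase $\dupcost$.
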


\begin{proof}

We reduce from the {\sc MPRST-SD} problem in which multiple trees are given.  We assume that $\delta = 1$ and $\lambda = 0$ and only consider duplications --- we use the same argument as before to justify that the problem is NP-hard for very small $\lambda$.  Let $S$ be the given species tree and $\G$ be the given gene forest.  As we are working with the decision version of {\sc MPRST-SD}, assume we are given an integer $t$ and asked whether $cost^{SD}(\G, S, \map) \leq t$ for some $\map$. 
Denote $n = |\L(\G)|$ and let $G_1, \ldots, G_k$ be the $k > 1$ trees of $\G$.
We construct a corresponding instance of a species tree $S'$ and a single gene tree $T$ as follows (the construction is illustrated in Figure~\ref{fig:nphard-onetree}).
Let $S'$ be a species tree obtained by adding $2(t + k)$ nodes ``above'' the root of $S$.  More precisely, first let $C$ be a caterpillar with $2(t + k)$ internal nodes.  Let $l$ be a deepest leaf of $C$.  Obtain $S'$ by replacing $l$ by the root of $S$.
Then, obtain the gene tree $T$ by taking $k$ copies $C_1, \ldots, C_k$ of $C$, and for each leaf $l'$ of each $C_i$ other than $l$,
put $s(l')$ as the corresponding leaf in $S'$.
Then for each $i \in [k]$, replace the $l$ leaf of $C_i$ by the tree $G_i$ (we keep the leaf mapping $s$ of $G_i$), resulting in a tree we call $T_i$.  Finally, let $T'$ be a caterpillar with $k$ leaves $h_1, \ldots, h_k$, and replace each $h_i$ by the $T_i$ tree.  The resulting tree is $T$.
We show that $cost(\G, S, \map) \leq t$ for some $\map$ if and only if $cost(T, S', \map') \leq t + k - 1$ for some $\map'$.

\begin{figure}[tb]
\begin{center}
\includegraphics[width=.7\linewidth]{./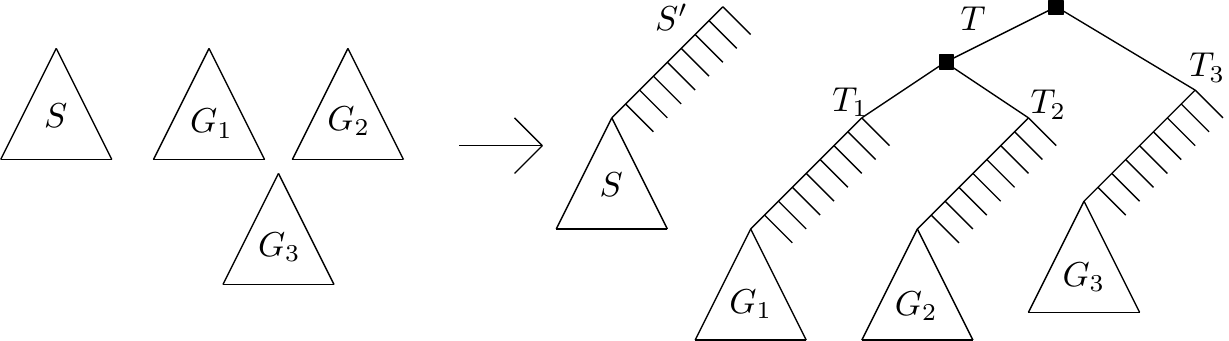}
\caption{The construction of $S'$ and $T$ from $S$ and the set of gene trees $G_1, \ldots, G_k$ (here $k = 3$).  The black squares indicate the path of $k - 1$ duplications that must be mapped to $r(S')$.
\label{fig:nphard-onetree}}
\end{center}
\end{figure}

Notice the following: in any mapping $\map$ of $T$, the $k - 1$ internal nodes of the $T'$ caterpillar must be duplications mapped to $r(S')$, so that $h_{\map}(r(S')) \geq k - 1$.  Also note that under the LCA mapping $\lcamap$ for $T$ and $S'$, the only duplications other than those $k - 1$ mentioned above occur in the $G_i$ subtrees. 
The $(\Rightarrow)$ is then easy to see: given $\map$ such that $cost(\G, S, \map) \leq t$, we set $\map'(v) = \map(v)$ for every node $v$ of $T$ that is also in $\G$ (namely the nodes of $G_1, \ldots, G_k$), and set $\map'(v) = \lcamap(v)$ for every other node.  This achieves a cost of $t + k - 1$.

As for the $(\Leftarrow)$ direction, suppose that $cost^{SD}(T', S', \map') \leq t + k - 1$ for some mapping $\map'$.  Observe that under the LCA-mapping in $T$, each root of each $G_i$ subtree has a path of $2(t + k)$ speciations in its ancestors.  If any node in a $G_i$ subtree of $T$ is mapped to $r(S')$, then all these speciations become duplications (by Lemma~\ref{lem:dups-from-below}), which would contradict $cost^{SD}(T', S', \map') \leq t + k - 1$.  We may thus assume that no node belonging to a $G_i$ subtree is mapped to $r(S')$.
Since $h_{\map'}(r(S')) \geq k - 1$, this implies that the restriction of $\map'$ to the $G_i$ subtrees has cost at most $t$.  

More formally, consider the mapping $\map''$ from $\G$ to $S'$ in which we put $\map''(v) = \map'(v)$ for all $v \in V(\G)$.
Then $cost^{SD}(\G, S', \map'') \leq cost^{SD}(T, S', \map') - (k - 1) \leq t$, because $\map''$ does not contain the top $k - 1$ duplications of $\map'$, and cannot introduce longer duplication paths than in $\map'$.

We are not done, however, since $\map''$ is a mapping from $\G$ to $S'$, and not from $\G$ to $S$.  Consider the set $Q \subseteq V(\G)$ of nodes $v$ of $\G$ such that $\map''(v) \in \overline{V(S)} := V(S') \setminus V(S)$.  We will remap every such node to $r(S)$ and show that this cannot increase the cost.
Observe that if $v \in Q$, then every ancestor of $v$ in $\G$ is also in $Q$. 
Also, every node in $Q$ is a duplication (by invoking Lemma~\ref{lem:remap-dups}).

Consider the mapping $\map^*$ from $\G$ to $S'$ in which we put $\map^*(v) = \map''(v)$ for all $v \notin Q$, and $\map^*(v) = r(S)$ for all $v \in Q$.  It is not difficult to see that $\map^*$ is valid. 



Now, $h_{\map^*}(s) = 0$ for all $s \in \overline{V(S)}$ and $h_{\map^*}(s) = h_{\map''}(s)$ for all $s \in V(S) \setminus \{r(S)\}$.  Moreover, the height of the $r(S)$ duplications under $\map^*$ cannot be more than the height of the forest induced by $Q$ and the duplications mapped to $r(S)$  under $\map''$.  In other words, 

\begin{align*}
h_{\map^*}(r(S)) &\leq \max_{G_i} (h_{\map''}(r(S)) + \sum_{s' \in \overline{V(S)}} h(G_i[\alpha'', s'])) \\
&
=  h_{\map''}(r(S)) + \max_{G_i}( \sum_{s' \in \overline{V(S)}} h(G_i[\alpha'', s'])) \\
&\leq  h_{\map''}(r(S)) + \sum_{s' \in \overline{V(S)}} \max_{G_i}(  h(G_i[\alpha'', s'])) \\
&= h_{\map''}(r(S)) + \sum_{s' \in \overline{V(S)}}  h(\G[\alpha'', s']) \\
&=  h_{\map''}(r(S)) + \sum_{s' \in \overline{V(S)}}  h_{\map''}(s')
\end{align*}

%
%
Therefore, the sum of duplication heights cannot have increased.
Finally, because $\map^*$ is a mapping from $\G$ to $S$, we deduce that $cost^{SD}(\G, S, \map^*) \leq cost^{SD}(\G, S', \map'') \leq t$, as desired.
\end{proof}

\section{An FPT algorithm}

In this section, we show that for costs $\delta > \lambda$ and a parameter $d > 0$, if there is an optimal reconciliation $\map$ of cost $cost^{SD}(\G, S)$ satisfying $\dupcost(\map) \leq d$, 
then $\map$ can be found in time $O(\lceil \frac{\delta}{\lambda}\rceil^d \cdot n \cdot \frac{\delta}{\lambda})$.

In what follows, we allow mappings to be partially defined, and we use the 
$\bot$ symbol to indicate undetermined mappings.  The idea is to start from a mapping in which every internal node is undetermined, and gradually determine those in a bottom-up fashion. 
We need an additional set of definitions.
We will assume that $\delta > \lambda > 0$ (although the algorithm described in this section can solve the $\lambda = 0$ case by setting $\lambda$ to a very small value).

We say that the mapping $\map : V(\G) \rightarrow V(S) \cup \{ \bot \}$ is a \emph{partial mapping}
if $\map(l) = s(l)$ for every leaf $l \in \L(\G)$, 
and it holds that whenever $\map(v) \neq \bot$, we have $\map(v') \neq \bot$ 
for every descendant $v'$ of $v$.  That is, if a node is determined, then all its descendants also are.  This also implies that every ancestor of a $\bot$-node is also a $\bot$-node.
A node $v \in V(\G)$ is a \emph{minimal $\bot$-node} (under $\map$) if $\map(v) = \bot$ and 
$\map(v') \neq \bot$ for each child $v'$ of $v$.
If $\map(v) \neq \bot$ for every $v \in V(\G)$, then $\map$ is called \emph{complete}.
Note that if $\map$ is partial and $\map(v) \neq \bot$, one can already determine whether 
$v$ is an $\S$ or a $\D$ node, and hence we may say that $v$ is a speciation or a duplication under $\map$.  Also note that the definitions of $\dupcost(\map), l(\map)$ and $h_{\map}(s)$ extend naturally to 
a partial mapping $\map$ by considering the forest induced by the nodes not mapped to $\bot$.

If $\map$ is a partial mapping, we call $\map'$ a \emph{completion} of 
$\map$ if $\map'$ is complete, and $\map(v) = \map'(v)$ whenever $\map(v) \neq \bot$.
Note that such a completion always exists, as in particular one can map every $\bot$-node to the root of $S$ (such a mapping must be valid, since all ancestors of a $\bot$-node are also $\bot$-nodes, which ensures that $r(S) = \map'(v) \geq \map'(v')$ for every descendant $v'$ of a newly mapped $\bot$-node $v$). 
We say that $\map'$ is an \emph{optimal completion of $\map$} if $cost^{SD}(\map')$ is minimum among 
every possible completion of $\map$. 
For a minimal $\bot$-node $v$ with children $v_1$ and $v_2$, 
we denote $\lcamap_{\map}(v) = LCA_S(\map(v_1), \map(v_2))$, i.e. the lowest species of $S$ to which 
$v$ can possibly be mapped to in any completion of $\map$.  Observe that $\lcamap_{\map}(v) \geq  \lcamap(v)$.
Moreover, if $v$ is a minimal $\bot$-node, then in any completion $\map'$ of $\map$, $\map'[v \rightarrow \lcamap_{\map}(v)]$ is a valid mapping.
A minimal $\bot$-node $v$ is called a \emph{lowest minimal $\bot$-node} if, for every minimal $\bot$-node $w$ distinct from $v$, 
either $\lcamap_{\map}(v) \leq \lcamap_{\map}(w)$ or $\lcamap_{\map}(v)$ and $\lcamap_{\map}(w)$ are incomparable.

The following Lemma forms the basis of our FPT algorithm, as it allows us to bound the possible mappings of a minimal $\bot$-node.

\begin{lemma}\label{lem:range-of-dups}
Let $\map$ be a partial mapping and let $v$ be a minimal $\bot$-node.
Then for any optimal completion $\map^*$ of $\map$, $\map^*(v) \leq par^{\lceil \delta/\lambda \rceil}(\lcamap_{\map}(v))$.
\end{lemma}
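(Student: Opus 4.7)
My plan is to argue by contradiction. Suppose some optimal completion $\alpha^*$ of $\alpha$ satisfies $\alpha^*(v) = t$ where $t = par^k(s)$ with $s := \lcamap_{\alpha}(v)$ and $k \geq \lceil \delta/\lambda \rceil + 1$, and I will exhibit the shifted completion $\alpha^{**} := \alpha^*[v \rightarrow s]$ and show that it has strictly smaller cost, contradicting optimality. Since $v$ is a minimal $\bot$-node under $\alpha$, its children $v_1, v_2$ satisfy $\alpha^*(v_i) = \alpha(v_i) \leq s$, and any ancestor of $v$ is mapped by $\alpha^*$ to a node $\geq t > s$; hence $\alpha^{**}$ remains a valid completion of $\alpha$. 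The loss savings are then immediate from the Shift-down Lemma applied to $\alpha^*$ at $v$, yielding $l(\alpha^{**}) \leq l(\alpha^*) - k$.

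The real work, and what I expect to be the main obstacle, is bounding the possible increase in $\dupcost$. I will handle this species by species. For any species $x \notin \{s, t\}$, including those strictly between $s$ and $t$ on the species-tree path, the set $D(\G, \alpha^{**}, x) = D(\G, \alpha^*, x)$ is unchanged (only $v$'s image moved, from $t$ to $s$), so $h_{\alpha^{**}}(x) = h_{\alpha^*}(x)$. At $x = t$, Lemma~\ref{lem:remap-dups} guarantees that $v$ is a duplication under $\alpha^*$ (since $t > \lcamap(v)$), so $\G[\alpha^{**}, t]$ is obtained from $\G[\alpha^*, t]$ by deleting the single node $v$, which can only decrease the forest's height.

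The delicate case is $x = s$. If $\alpha(v_1)$ and $\alpha(v_2)$ are incomparable, then $v$ becomes a speciation under $\alpha^{**}$ and is absent from $\G[\alpha^{**}, s]$, giving $h_{\alpha^{**}}(s) = h_{\alpha^*}(s)$. Otherwise $v$ is a new duplication at $s$, and the crucial observation is that $par(v)$ is still mapped to a node $\geq t > s$ under $\alpha^{**}$ (or $v$ is a root of its tree, in which case the point is immediate), so $v$ is necessarily a root of the forest $\G[\alpha^{**}, s]$, with at most $v_1$ and $v_2$ as children in the forest. Since $v_1, v_2$ were themselves roots of $\G[\alpha^*, s]$ when present (their parent $v$ was not in it), the tree containing $v$ in $\G[\alpha^{**}, s]$ has height at most $h_{\alpha^*}(s) + 1$, so $h_{\alpha^{**}}(s) \leq h_{\alpha^*}(s) + 1$. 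Summing across all species yields $\dupcost(\alpha^{**}) \leq \dupcost(\alpha^*) + 1$, hence $cost^{SD}(\alpha^{**}) - cost^{SD}(\alpha^*) \leq \delta - k\lambda$. Since $k \geq \lceil \delta/\lambda \rceil + 1 > \delta/\lambda$, we have $k\lambda > \delta$, making this difference strictly negative and contradicting the optimality of $\alpha^*$.
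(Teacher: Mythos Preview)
Your proof is correct and follows essentially the same contradiction argument as the paper: remap $v$ down to $\lcamap_{\map}(v)$, invoke the Shift-down Lemma for a loss saving of at least $k$, bound the duplication-height increase by $1$, and compare costs. The paper's proof simply asserts $\dupcost(\map') \leq \dupcost(\map^*) + 1$ without justification, whereas you supply the species-by-species argument; this extra detail is sound (one could note that at $x=t$ the forest $\G[\alpha^{**},t]$ might lose $par(v)$ as well if its event flips to $\S$, but that only helps the inequality).
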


\begin{proof}
Let $\map^*$ be an optimal completion of $\map$ and let $\map' := \map^*[v \rightarrow \lcamap_{\map}(v)]$.  
Note that $\dupcost(\map') \leq \dupcost(\map^*) + 1$.
Now suppose that $\map^*(v) > par^{\lceil \delta/\lambda \rceil}(\lcamap_{\map}(v))$.
Then by the Shift-down lemma, $l(\map^*) - l(\map') > \lceil \delta/\lambda \rceil \geq \delta/\lambda$.
Thus $cost^{SD}(\map^*) - cost^{SD}(\map') > -\delta + \lambda (\delta/\lambda) = 0$.
This contradicts the optimality of $\map^*$.
\end{proof}


A node $v \in V(\G)$ is a \emph{required duplication} (under $\map$) if, in any completion $\map'$ of $\map$, 
$v$ is a duplication under $\map'$.
We first show that required duplications are easy to find.

\begin{lemma}\label{lem:req-dups}
Let $v$ be a minimal $\bot$-node under $\map$, and let $v_1$ and $v_2$ be its two children.
Then $v$ is a required duplication under $\map$ if and only if $\map(v_1) \geq \lcamap(v)$ or $\map(v_2) \geq \lcamap(v)$.
\end{lemma}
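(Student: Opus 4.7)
My plan is to prove the two directions separately, with the forward direction being essentially immediate from previous lemmas and the reverse direction requiring an explicit construction of a completion in which $v$ becomes a speciation.

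For the $(\Leftarrow)$ direction, I would use Lemma~\ref{lem:dups-from-below} directly. Since $v$ is a minimal $\bot$-node, both $v_1$ and $v_2$ are determined by $\map$, and any completion $\map'$ of $\map$ must agree with $\map$ on them: $\map'(v_i) = \map(v_i)$ for $i = 1,2$. Then if $\map(v_i) \geq \lcamap(v)$ for some $i$, there is a proper descendant of $v$ (namely $v_i$ itself) whose image under $\map'$ is at least $\lcamap(v)$, and Lemma~\ref{lem:dups-from-below} forces $v$ to be a duplication under $\map'$. So $v$ is a required duplication.

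For the $(\Rightarrow)$ direction, I would argue the contrapositive: assume $\map(v_1) \not\geq \lcamap(v)$ and $\map(v_2) \not\geq \lcamap(v)$, and exhibit a completion $\map'$ in which $v$ is a speciation. The key auxiliary claim is that $\map(v_1)$ and $\map(v_2)$ must be incomparable in $S$. Indeed, suppose for contradiction they were comparable, say $\map(v_1) \leq \map(v_2)$. Then $LCA_S(\map(v_1), \map(v_2)) = \map(v_2)$. Combining with $\map(v_i) \geq \lcamap(v_i)$ (which holds for valid partial mappings), we get
\[
\map(v_2) = LCA_S(\map(v_1), \map(v_2)) \geq LCA_S(\lcamap(v_1), \lcamap(v_2)) = \lcamap(v),
\]
contradicting the assumption. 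Hence $\map(v_1)$ and $\map(v_2)$ are incomparable.

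Finally, I would construct the completion by setting $\map'(v) := \lcamap_{\map}(v) = LCA_S(\map(v_1),\map(v_2))$ and then extending $\map'$ to all remaining $\bot$-nodes (all of which are ancestors of $v$) by mapping them to $r(S)$, as in the existence argument for completions given in the text. By definition of a speciation in Definition~\ref{def:rec}, $v$ is then an $\S$-node under $\map'$, so $v$ is not a required duplication. The only potential obstacle is verifying the incomparability step, but it follows cleanly from the LCA identity above together with the validity condition $\map(v_i) \geq \lcamap(v_i)$; the rest of the argument is a direct appeal to earlier results.
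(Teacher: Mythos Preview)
Your proof is correct and follows essentially the same approach as the paper: for $(\Leftarrow)$ you invoke Lemma~\ref{lem:dups-from-below} where the paper splits into two cases using Lemma~\ref{lem:remap-dups} (your route is slightly cleaner), and for $(\Rightarrow)$ you argue the contrapositive via the same LCA-based incomparability argument and then exhibit a completion making $v$ a speciation. One small slip: your parenthetical ``all of which are ancestors of $v$'' is not true in general (there may be $\bot$-nodes in other trees of $\G$ or in sibling subtrees), but this is harmless since mapping every remaining $\bot$-node to $r(S)$ yields a valid completion regardless.
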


\begin{proof}
Suppose that $\map(v_1) \geq \lcamap(v)$, and let $\map'$ be a completion of $\map$.
If $\map'(v) = \map'(v_1)$, then $v$ is a duplication by definition. 
Otherwise, $\map'(v) > \map'(v_1) = \map(v_1) \geq \lcamap(v)$, and $v$ is a duplication by Lemma~\ref{lem:remap-dups}.  
The case when $\map(v_2) \geq \lcamap(v)$ is identical.

Conversely, suppose that $\map(v_1) < \lcamap(v)$ and $\map(v_2) < \lcamap(v)$.  Then $\map(v_1)$ and $\map(v_2)$ must be incomparable descendants of $\lcamap(v)$ 
(because otherwise if e.g. $\map(v_1) \leq \map(v_2)$, then we would have $\lcamap(v) = LCA_S(\lcamap(v_1), \lcamap(v_2)) \leq LCA_S(\map(v_1), \map(v_2)) = \map(v_2)$, whereas we are assuming that $\map(v_2) < \lcamap(v)$).
Take any completion 
$\map'$ of $\map$ such that $\map'(v) = \lcamap(v)$.
To see that $v$ is a speciation under $\map'$, it remains to argue that 
$\map'(v) = \lcamap(v) = LCA_{S}(\map(v_1), \map(v_2))$.
Since $\lcamap(v)$ is an ancestor of both $\map(v_1)$ and $\map(v_2)$, 
we have $LCA_{S}(\map(v_1), \map(v_2)) \leq \lcamap(v)$. 
We also have $\lcamap(v) = LCA_{S}(\lcamap(v_1), \lcamap(v_2)) \leq LCA_{S}(\map(v_1), \map(v_2))$,
and equality follows.
\end{proof}

Lemma~\ref{lem:spec-compl} and Lemma~\ref{lem:same-height-compl} allow 
us to find minimal $\bot$-nodes of $\G$ that are the easiest to deal with, as their mapping in an optimal completion can be determined with certainty.

\begin{lemma}\label{lem:spec-compl}
Let $v$ be a minimal $\bot$-node under $\map$.  If $v$ is not a required duplication under $\map$, 
then $\map^*(v) = \lcamap_{\map}(v)$ for any optimal completion $\map^*$ of $\map$.
\end{lemma}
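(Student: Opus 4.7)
The plan is to argue by contradiction: assume there is an optimal completion $\map^*$ of $\map$ with $\map^*(v) \neq \lcamap_{\map}(v)$, and then produce another completion that is strictly cheaper by shifting $v$ down to $\lcamap_{\map}(v)$.

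First I would unpack what ``not a required duplication'' gives us. By the contrapositive of Lemma~\ref{lem:req-dups}, the two children $v_1, v_2$ of $v$ satisfy $\map(v_1) < \lcamap(v)$ and $\map(v_2) < \lcamap(v)$, and, as shown inside the proof of that lemma, $\map(v_1)$ and $\map(v_2)$ are incomparable descendants of $\lcamap(v)$ with $LCA_S(\map(v_1), \map(v_2)) = \lcamap(v)$. In particular, $\lcamap_{\map}(v) = \lcamap(v)$, and in \emph{any} completion $\map'$ of $\map$ the assignment $\map'(v) = \lcamap_{\map}(v)$ makes $v$ a speciation node.

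Next I would set up the comparison. In any completion $\map^*$, we have $\map^*(v) \geq \lcamap_{\map}(v)$ by validity, so if $\map^*(v) \neq \lcamap_{\map}(v)$ then $\map^*(v) > \lcamap_{\map}(v)$, and there is an integer $k \geq 1$ with $par^k(\lcamap_{\map}(v)) = \map^*(v)$. Define $\map' := \map^*[v \rightarrow \lcamap_{\map}(v)]$; this is valid since the children of $v$ already map strictly below $\lcamap_{\map}(v)$ and the parent of $v$ maps at or above $\map^*(v) > \lcamap_{\map}(v)$.

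I would then compare costs. Under $\map^*$, node $v$ is a duplication by Lemma~\ref{lem:remap-dups} (because $\map^*(v) > \lcamap(v)$), whereas under $\map'$ node $v$ is a speciation by the first paragraph, and all other nodes keep their $\map^*$-mapping. Hence removing $v$ from the duplication forest at $\map^*(v)$ can only decrease $h_{\map'}(\map^*(v))$, no other $h_{\map'}(s)$ increases, and in particular $\dupcost(\map') \leq \dupcost(\map^*)$. The Shift-down lemma yields $l(\map') \leq l(\map^*) - k \leq l(\map^*) - 1$. Since $\lambda > 0$, this gives $cost^{SD}(\map') < cost^{SD}(\map^*)$, contradicting the optimality of $\map^*$.

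The main delicate point, and the only step I would double-check carefully, is the claim that $\dupcost(\map') \leq \dupcost(\map^*)$; the subtlety is that heights of duplication forests can behave non-monotonically under remappings (as noted just before the Shift-down lemma). Here it is fine because $v$ is only \emph{removed} as a duplication (it is neither re-added at $\lcamap_{\map}(v)$ nor does any other node change its mapping), so each $h_{\map'}(s)$ is bounded above by the corresponding $h_{\map^*}(s)$.
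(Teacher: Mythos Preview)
Your proof is correct and follows essentially the same approach as the paper's: use Lemma~\ref{lem:req-dups} to deduce $\lcamap_{\map}(v)=\lcamap(v)$ and that $v$ becomes a speciation there, then apply the Shift-down lemma to derive a strictly cheaper completion whenever $\map^*(v)>\lcamap_{\map}(v)$. Your write-up is in fact a bit more explicit than the paper's on two points (the contradiction structure and the justification that $\dupcost$ cannot increase because $v$ is removed as a duplication and not re-added), which is fine.
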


\begin{proof}
Let $v_1, v_2$ be the children of $v$, and let $\map^*$ be an optimal completion of $\map$.
Since $v$ is not a required duplication, by Lemma~\ref{lem:req-dups} we have $\map(v_1) < \lcamap(v)$ and $\map(v_2) < \lcamap(v)$ and, as argued in the proof of Lemma~\ref{lem:req-dups}, $\map(v_1)$ and $\map(v_2)$ are incomparable.  We thus have that $\lcamap_{\map}(v) = \lcamap(v)$.  
Then $\map^*[v \rightarrow \lcamap(v)]$ is a valid mapping, and $v$ is a speciation 
under this mapping.  Hence $\dupcost(\map^*[v \rightarrow \lcamap(v)]) \leq \dupcost(\map^*)$.  
Then by the Shift-down lemma, this new mapping has fewer losses, 
and thus attains a lower cost than $\map^*$.
\end{proof}

\begin{lemma}\label{lem:same-height-compl}
Let $v$ be a minimal $\bot$-node under $\map$,
and let $\map_{v} := \map[v \rightarrow \lcamap_{\map}(v)]$.
If $\dupcost(\map) = \dupcost(\map_{v})$, then $\map^*(v) = \lcamap_{\map}(v)$ for any optimal completion $\map^*$ of $\map$.
\end{lemma}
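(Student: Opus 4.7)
The plan is to argue by contradiction: suppose that some optimal completion $\map^*$ has $\map^*(v) = s^* \neq \lcamap_{\map}(v) =: s$. Since any completion is valid and $\map^*$ agrees with $\map$ on $v$'s children, we have $\map^*(v) \geq s$, so $s^* > s$, i.e.\ $s^* = par^k(s)$ for some $k \geq 1$. I then want to shift $v$ down and derive a strictly smaller cost. Define $\map' := \map^*[v \rightarrow s]$. Validity of $\map'$ is immediate: the parent of $v$ (if any) is mapped at or above $s^*$ in $\map^*$, hence above $s$, and the children of $v$ are mapped at or below $s$ by definition of $\lcamap_{\map}(v) = LCA_S(\map(v_1),\map(v_2))$.

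The heart of the argument will be showing $\dupcost(\map') \leq \dupcost(\map^*)$. Since only $v$'s mapping changes, $h_{\map'}(s'') = h_{\map^*}(s'')$ for all $s'' \notin \{s, s^*\}$. At $s^*$, removing a node from a forest cannot increase its height, so $h_{\map'}(s^*) \leq h_{\map^*}(s^*)$. The nontrivial inequality is $h_{\map'}(s) \leq h_{\map^*}(s)$, and this is where the hypothesis $\dupcost(\map) = \dupcost(\map_v)$ is used. I will first translate the hypothesis into the statement $h(\G[\map_v,s]) = h(\G[\map,s])$ (since $\map_v$ differs from $\map$ only by mapping $v$ to $s$, all other $h_\cdot(\cdot)$ values agree).

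Next, the key structural observation: no proper ancestor of $v$ in $\G$ is mapped to $s$ under $\map^*$, because all such ancestors are mapped to nodes $\geq s^* > s$. Hence every node of $\G[\map^*, s]$ is either a descendant of $v$ or incomparable to $v$ in $\G$. Since $\map$ and $\map^*$ agree on all descendants of $v$ (as $v$ is a minimal $\bot$-node under $\map$), the descendants of $v$ lying in $\G[\map^*,s]$ are exactly those of $\G[\map,s]$. Therefore $\G[\map^*, s]$ decomposes as the disjoint union (as an induced sub-forest of $\G$) of $\G[\map, s]$ and a sub-forest $X$ on nodes incomparable to $v$. Adding $v$ gives $\G[\map', s] = \G[\map_v, s] \sqcup X$. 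Consequently,
\[
h_{\map'}(s) = \max(h(\G[\map_v,s]), h(X)) = \max(h(\G[\map,s]), h(X)) = h_{\map^*}(s),
\]
where the middle equality uses the hypothesis. Summing over $V(S)$ yields $\dupcost(\map') \leq \dupcost(\map^*)$.

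Finally, the Shift-down lemma gives $l(\map') \leq l(\map^*) - k \leq l(\map^*) - 1$, so
\[
cost^{SD}(\map') \leq \delta \cdot \dupcost(\map^*) + \lambda (l(\map^*) - 1) < cost^{SD}(\map^*),
\]
since $\lambda > 0$, contradicting the optimality of $\map^*$. The main obstacle is the duplication-forest bookkeeping in the third paragraph; once one sees that ancestors of $v$ under $\map^*$ cannot pollute the forest at $s$, the hypothesis lifts cleanly from the partial setting to the full completion, and the Shift-down lemma closes the argument.
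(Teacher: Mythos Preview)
Your overall strategy is exactly the paper's: assume $\map^*(v)=s^*>s$, shift $v$ down to $s$, argue $\dupcost$ does not increase, and finish with the Shift-down lemma. The gap is in your forest bookkeeping. You write $\G[\map^*,s]=\G[\map,s]\sqcup X$ and $\G[\map',s]=\G[\map_v,s]\sqcup X$ and then assert $h_{\map'}(s)=\max(h(\G[\map_v,s]),h(X))$ and $h_{\map^*}(s)=\max(h(\G[\map,s]),h(X))$. These equalities presume that no node of $X$ is adjacent in $\G$ to a node of $\G[\map,s]$. But $X$ consists of former $\bot$-nodes incomparable to $v$, while $\G[\map,s]$ may also contain determined duplications incomparable to $v$; a node $w\in X$ can perfectly well be the $\G$-parent of such a node $u\in\G[\map,s]$ (the definition of partial mapping only forces descendants of determined nodes to be determined, not ancestors). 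In that case $w$ and $u$ lie in the same tree of $\G[\map^*,s]$, and the height of that tree can exceed both $h(\G[\map,s])$ and $h(X)$, so your $\max$ formula fails.

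The fix is to argue only about the tree containing $v$, as the paper does. Since no proper ancestor of $v$ is mapped to $s$ under $\map^*$, the connected component $T_v$ of $v$ in $\G[\map',s]$ consists of $v$ together with descendants of $v$ only; because $X$-nodes are incomparable to $v$, $T_v$ coincides with the component of $v$ in $\G[\map_v,s]$, hence $h(T_v)\le h_{\map_v}(s)=h_{\map}(s)\le h_{\map^*}(s)$. Every other tree of $\G[\map',s]$ does not contain $v$ and is therefore already a tree of $\G[\map^*,s]$, so its height is at most $h_{\map^*}(s)$. This gives $h_{\map'}(s)\le h_{\map^*}(s)$ directly, without any claim about how $X$ interacts with the rest of $\G[\map,s]$.
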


\begin{proof}

Let $\map^*$ be an optimal completion of $\map$.  Denote $s := \lcamap_{\map}(v)$,
and assume that $\map^*(v) > s$ (as otherwise, we are done).  
Let $\map' = \map^*[v \rightarrow s]$. 
We have that $l(\map') < l(\map^*)$ by the Shift-down lemma.
To prove the Lemma, we then show that $\dupcost(\map') \leq \dupcost(\map^*)$.
Suppose otherwise that $\dupcost(\map') > \dupcost(\map^*)$.  As only $v$ changed mapping to $s$ to go from $\map^*$ to $\map'$, this implies that $h_{\map'}(s) > h_{\map^*}(s)$ because of $v$.  Since under $\map^*$, no ancestor of $v$ 
is mapped to $s$, it must be that under $\map'$, $v$ is the root of a subtree $T$ of height $h_{\map'}(s)$ of duplications in $s$.  
Since $T$ contains only descendants of $v$, it must also be that $h_{\map_{v}}(s) = h_{\map'}(s)$ (here $\map_v$ is the mapping defined in the Lemma statement).
As we are assuming that $h_{\map'}(s) > h_{\map^*}(s)$, we get 
$h_{\map_{v}}(s) > h_{\map^*}(s)$.
This is a contradiction, since $h_{\map^*}(s) \geq h_{\map}(s) = h_{\map_{v}}(s)$
(the left inequality because $\map^*$ is a completion of $\map$, and the right equality by the choice of $\map_v$). Then $l(\map') < l(\map^*)$ and $\dupcost(\map') \leq \dupcost(\map^*)$ contradicts the fact that $\map^*$ is optimal.
\end{proof}

We say that a minimal $\bot$-node $v \in V(\G)$ is \emph{easy} (under $\map)$ if $v$ falls into one of the cases 
described by Lemma~\ref{lem:spec-compl} or Lemma~\ref{lem:same-height-compl}.
Formally, $v$ is easy if either $v$ is a speciation mapped to $\lcamap_{\map}(v)$ under any optimal completion of $\map$ 
(Lemma~\ref{lem:spec-compl}), or $\dupcost(\map) = \dupcost(\map[v \rightarrow \lcamap_{\map}(v)])$ (Lemma~\ref{lem:same-height-compl}).
Our strategy will be to ``clean-up'' the easy nodes, meaning that we map them to $\lcamap_{\map}(v)$ as prescribed above, 
and then handle the remaining non-easy nodes by branching over the possibilities.
We say that a partial mapping $\map$ is \emph{clean} if every minimal $\bot$-node $v$ satisfies the two following conditions:

\begin{itemize}
\setlength{\itemindent}{.5in}
\item[{[\bf C1]}]
$v$ is not easy;
\item[{[\bf C2]}]
for all duplication nodes $w$ (under $\alpha$ with $\alpha(w) \neq \bot$), either $\map(w) \leq \lcamap_{\map}(v)$ or $\map(w)$ is incomparable with $\lcamap_{\map}(v)$.
\end{itemize}

Roughly speaking, C2 says that all further duplications that may ``appear'' in a completion of $\map$ will be mapped to nodes ``above''  the current duplications in $\map$.  The purpose of C2 is to allow us to create duplication nodes with mappings from the bottom of $S$ to the top.  
Our goal will be to build our $\alpha$ mapping in a bottom-up fashion in $\G$ whilst maintaining this condition.
The next lemma states that if $\map$ is clean and some \emph{lowest} minimal $\bot$-node $v$ gets mapped to species $s$, 
then $v$ brings with it every minimal $\bot$-node that can be mapped to $s$.

\begin{lemma}\label{lem:clean-mapping}
Suppose that $\map$ is a clean partial mapping, and let $\map^*$ be an optimal completion of $\map$.
Let $v$ be a lowest minimal $\bot$-node under $\map$, and let $s := \map^*(v)$.
Then for every minimal $\bot$-node $w$ 
such that $\lcamap_{\map}(w) \leq s$, 
we have $\map^*(w) = s$.
\end{lemma}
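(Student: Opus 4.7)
The plan is a proof by contradiction. Suppose there exists a minimal $\bot$-node $w$ with $\lcamap_{\map}(w) \leq s$ but $\map^*(w) \neq s$. I will construct a modified valid mapping $\map'$ from $\map^*$ by remapping $w$ (and possibly a chain of ancestors of $w$ in $\G$) to $s$, and show that $cost^{SD}(\map') < cost^{SD}(\map^*)$, contradicting the optimality of $\map^*$. Note first that since $w$ and $v$ are both minimal $\bot$-nodes, they are structurally incomparable in $\G$ (otherwise one of them would have a $\bot$-node child).

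First I would specify $\map'$. In the simplest case, $\map^*(w) > s$ and I set $\map' := \map^*[w \rightarrow s]$; validity is immediate since $\map^*(par(w)) \geq \map^*(w) > s$ and $\lcamap_{\map}(w) \leq s$ takes care of the children. If instead $\map^*(w) < s$ or $\map^*(w)$ is incomparable with $s$, I would walk up the ancestor-path of $w$ in $\G$ and remap every ancestor $u$ with $\map^*(u) \not\geq s$ to $s$ as well, stopping at the first ancestor $u^*$ with $\map^*(u^*) \geq s$ (or at the root of $w$'s tree, if no such ancestor exists). Let $P$ denote the set of remapped nodes. The nontrivial validity check is for the ``sibling'' children of nodes in $P$ that are not themselves in $P$; I would argue, using that $v$ is a lowest minimal $\bot$-node and that $\lcamap_{\map}(w) \leq s$, that such siblings are mapped by $\map^*$ to descendants of $s$, so raising the nodes of $P$ to $s$ preserves the parent-child order.

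The cost analysis is the heart of the argument. For the losses, the Shift-down lemma yields a strict decrease: shifting each $u \in P$ from $\map^*(u)$ down (or across) to $s$ saves at least one loss on each branch into $u$, and the total saving strictly exceeds any losses newly introduced above the top of $P$. For the duplications, the key claim is $\dupcost(\map') \leq \dupcost(\map^*)$. Here cleanness condition C2 is crucial: no duplication already resolved in $\map$ is mapped strictly above $\lcamap_{\map}(v) \leq s$. Consequently, every node sitting at $s$ in $\G[\map^*, s]$ is either $v$ itself, another minimal $\bot$-node, or an ancestor of one, i.e.\ a node that was $\bot$ in $\map$. Because $v$ and $w$ are incomparable in $\G$, the subtree formed by $w$ together with the nodes of $P$ in $\G[\map', s]$ is disjoint from $v$'s component; its height is at most $|P|$, which is in turn no larger than the height already present in the $v$-component of $\G[\map^*, s]$ (since the ancestors of $v$ mapped to $s$ under $\map^*$ would contribute an analogous chain).

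The hard part will be the case $\map^*(w) \perp s$: here the remapping of ancestors in $P$ can interact with sibling subtrees, and one must verify both validity and that the new duplication heights at $s$ do not exceed the existing ones. I expect this to go through by combining (i) the lowest-minimal-$\bot$ property of $v$, which forces $\lcamap_{\map}(v)$ to sit ``beneath'' or off to the side of every other minimal $\bot$-scope, with (ii) Lemma~\ref{lem:range-of-dups} bounding how far above $\lcamap_{\map}(w)$ the mapping $\map^*(w)$ can lie. The subcase $\lcamap_{\map}(v) = s$, where some already-fixed duplications may coincide with $s$ via C2, may warrant a separate short argument, but the remapping structure above should still yield the strict improvement that contradicts optimality.
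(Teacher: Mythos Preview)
Your approach has a genuine gap in the case $\map^*(w) < s$. First, note that the ``incomparable'' case you worry about cannot arise: both $s$ and $\map^*(w)$ are ancestors in $S$ of $\lcamap_{\map}(w)$ (the former by hypothesis, the latter because $\map^*$ is a completion), and ancestors of a fixed node in a tree form a chain, so $s$ and $\map^*(w)$ are always comparable. The real difficulty is therefore entirely in the case $\map^*(w) < s$.

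In that case your plan is to remap $w$ and a chain $P$ of its ancestors \emph{up} to $s$. But the Shift-down lemma goes the other way: pushing a node upward in $S$ \emph{adds} losses on the two branches below it and removes at most one loss on the branch above, so $l(\map') \geq l(\map^*)$ rather than $l(\map') < l(\map^*)$. Your sentence ``shifting each $u\in P$ from $\map^*(u)$ down (or across) to $s$ saves at least one loss'' is simply false here, since every $u\in P$ has $\map^*(u) < s$ and is being raised. No amount of control on $\dupcost$ will rescue the contradiction once the loss count moves the wrong direction.

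The paper's proof handles $\map^*(w) < s$ by the opposite manoeuvre: it remaps $v$ \emph{down} to $s' := \map^*(w)$ (after disposing of the boundary subcase $s' = \lcamap_{\map}(v)$ by swapping the roles of $v$ and $w$). Because $v$'s children are fixed below $\lcamap_{\map}(v) < s'$ and $v$'s ancestors are mapped above $s$, the node $v$ is alone on its root-to-leaf path at level $s'$, so $\dupcost$ does not increase (using $h_{\map^*}(s')>0$ from $w$), while the Shift-down lemma now legitimately gives a strict loss saving. Your argument for the case $\map^*(w) > s$ is essentially the paper's, but you need to reverse direction and move $v$ rather than $w$ when $\map^*(w) < s$.
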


\begin{proof} 

Denote $\map' := \map[v \rightarrow s]$.
Suppose first that $s = \lcamap_{\map}(v)$.
Note that since $\map$ is clean, $v$ is not easy, which implies that $h_{\map'}(s) = h_{\map}(s) + 1$.
Since $v$ is a lowest minimal $\bot$-node, if $w$ is a minimal $\bot$-node such that 
$\lcamap_{\map}(w) \leq s$, 
we must have $\lcamap_{\map}(w) = s$, as otherwise $v$ would not have the `lowest' property.
Moreover, because  $v$ and $w$ are both minimal $\bot$-nodes under the partial mapping $\map$, one cannot be the ancestor of the other and so $v$ and $w$ are incomparable.  This implies that mapping $w$ to $s$ under $\map'$ 
cannot further increase $h_{\map'}(s)$ (because we already increased it by $1$ when mapping $v$ to $s$). Thus $\dupcost(\map') = \dupcost(\map'[w \rightarrow s])$, and $w$ is easy under $\map'$ 
and must be mapped to $s$ by Lemma~\ref{lem:same-height-compl}.  This proves the $\map^*(v) = \lcamap_{\map}(s)$ case.

Now assume that $s > \lcamap_{\map}(v)$, and let $w$ be a minimal $\bot$-node with $\lcamap_{\map}(w) \leq s$.
Let us denote $s' := \map^*(w)$.  If $s' = s$, then we are done.
Suppose  that $s' < s$, noting that $h_{\map^*}(s') > 0$ (because $w$ must be a duplication node, due to $\alpha$ being clean).  
If $s' = \lcamap_{\map}(v)$, then $w$ is also a lowest minimal $\bot$-node.  In this case, using the arguments from the previous paragraph and swapping the roles of $v$ and $w$, one can see that $v$ is easy in 
$\map[w \rightarrow s']$ and must be mapped to $s' < s$, a contradiction.
Thus assume $s' > \lcamap_{\map}(v)$.
Under $\map^*$, for each child $v'$ of $v$, we have $\map^*(v') \leq \lcamap_{\map}(v) < s'$, and 
for each ancestor $v''$ of $v$, we have $\map^*(v'') \geq \map^*(v)
= s > s'$. 
Therefore, by remapping $v$ to $s'$, $v$ is the only duplication mapped to $s'$ among its ancestors and descendants.  In other words, because $h_{\map^*}(s') > 0$, we have $\dupcost(\map^*[v \rightarrow s']) \leq \dupcost(\map^*)$.  Moreover by the Shift-down lemma, 
$l(\map^*[v \rightarrow s']) < l(\map^*)$, which contradicts the optimality of $\map^*$.

The remaining case is $s'  > s$. Note that $h_{\map^*}(s) > 0$ (because $v$ must be a duplication node, due to $\alpha$ being clean).   
Since it holds that $v$ is a minimal $\bot$-node, that $\map$ is clean and that $s > \lcamap_{\map}(v)$, it must 
be the case that $\map$ has no duplication mapped to $s$ (by the second property of cleanness).
In particular, $w$ has no descendant that is a duplication mapped to $s$ under $\map$ (and hence under $\map^*$).
Moreover, as $s' = \map^*(w) > s$, $w$ has no ancestor that is a duplication mapped to $s$.  
Thus $\dupcost(\map^*[w \rightarrow s]) \leq \dupcost(\map^*)$, and the Shift-down lemma contradicts the optimality of $\map^*$.
This concludes the proof.
\end{proof}

\begin{figure}[t]
\begin{center}
\includegraphics[width=.8\linewidth]{./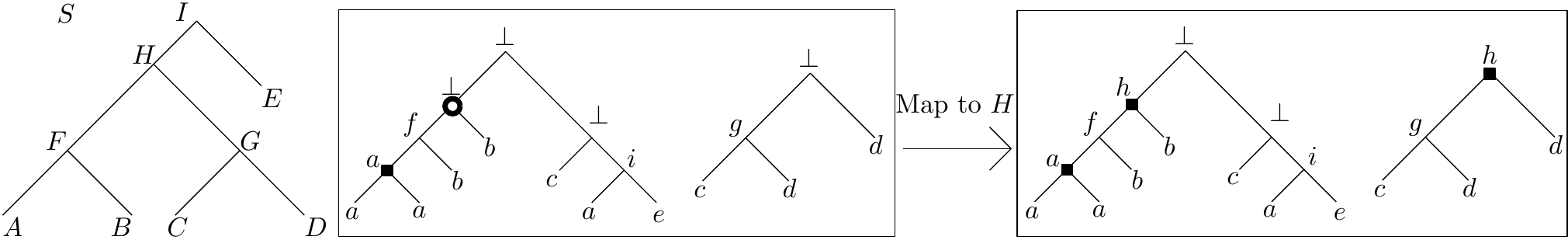}
\caption{An illustration of one pass through the algorithm.  The species tree $S$ is left and $\G$ has two trees (middle) and has partial mapping $\map$ (labels are the lowercase of the species).  Here $\map$ is in a clean state, and the algorithm will pick a lowest minimal $\bot$-node (white circle) and try to map it to, say, $H$.  The forest on the right is the state of $\map$ after applying this and cleaning up.
\label{fig:fpt-algo}}
\end{center}
\end{figure}

We are finally ready to describe our algorithm. 
We start from a partial mapping $\map$ with 
$\map(v) = \bot$ for every internal node $v$ of $\G$.
We gradually ``fill-up'' the $\bot$-nodes of $\map$ in a bottom-up fashion, 
maintaining a clean mapping at each step and ensuring that each decision 
leads to an optimal completion $\map^*$.
To do this, we pick a lowest minimal $\bot$-node $v$, and ``guess'' $\map^*(v)$ 
among the $\lceil \delta/\lambda \rceil$ possibilities.  This increases some $h_{\map}(s)$ by $1$.
For each such guess $s$, we use Lemma~\ref{lem:clean-mapping} to map the appropriate 
minimal $\bot$-nodes to $s$, then take care of the easy nodes to obtain 
another clean mapping.  We repeat until we have either found a complete mapping or we have 
a duplication height higher than $d$.  An illustration of a pass through the algorithm is shown in Figure~\ref{fig:fpt-algo}.

Notice that the algorithm 
assumes that it receives a clean partial mapping $\map$.
In particular, the initial mapping $\map$ that we pass to the first call should satisfy the two properties of cleanness.
To achieve this, we start with a partial mapping $\map$ in which every internal node is a $\bot$-node.
Then, while there is a minimal $\bot$-node $v$ that is not a required duplication, we set $\map(v) = \lcamap_{\map}(v)$, which makes $v$ a speciation.
It is straightforward to see that the resulting $\map$ is clean: C1 is satisfied because we cannot make any more minimal $\bot$-nodes become speciations, and we cannot create any duplication node without increasing $cost^{SD}$ because $\map$ has no duplication.  C2 is met because there are no duplications at all.

\begin{algorithm}
\caption{FPT algorithm for parameter $d$.}\label{alg:fpt-d}
\begin{algorithmic}[1]
\Procedure{SuperReconcile}{$\G, S, \map, d$}
\Statex $\G$ is the set of input trees, $S$ is the species tree, $\map$ is a \emph{clean} partial mapping, $d$ is the maximum value of $\dupcost(\alpha)$.
\If{$\dupcost(\map) > d$} 
  \State Return $\infty$ \label{line:nosol}
\ElsIf{$\map$ is a complete mapping}
  \State Return $cost^{SD}(\map)$   \label{line:returnsol}
\Else  

  \State Let $v$ be a lowest minimal $\bot$-node	\label{line:choose-bot}
  
  \State $bestCost \gets \infty$
  \For{$s$ such that $\lcamap_{\map}(v) \leq s \leq par^{\lceil \delta/\lambda \rceil}(\lcamap_{\map}(v))$} \label{line:try-all-s}

  	\State Let $\map' = \map[v \rightarrow s]$  	\label{line:remap-v}
    
    \For{minimal $\bot$-node $w \neq v$ under $\map$ such that $\lcamap_{\map}(w) \leq s$}	\label{line:remap-other-minimal}
    	\State Set $\map' = \map'[w \rightarrow s]$
    \EndFor
    
  \While{there is a minimal $\bot$-node $w$ that is easy under $\map'$}  \label{line:remap-easy}
  	\State Set $\map' = \map'[w \rightarrow \lcamap_{\map'}(w)]$  
  \EndWhile

  	\State $cost \gets superReconcile(\G, S, \map', d)$	\label{line:reccall}
  	\State \algorithmicif~$cost < bestCost$~\algorithmicthen~$bestCost \gets cost$
  \EndFor

  \State Return $bestCost$
\EndIf
\EndProcedure
\end{algorithmic}
\end{algorithm}


See below for the proof of correctness.  The complexity follows from the fact that the algorithm creates a search tree of degree $\lceil \delta/\lambda \rceil$ of depth at most $d$.  The main technicality is to show that the algorithm maintains a clean mapping before each recursive call\footnote{There is a subtlety to consider here.  
What we have shown is that if there exists a mapping $\map$ of minimum cost $cost^{SD}(\G, S)$ with $\dupcost(\map) \leq d$, then the algorithm finds it.
It might be that a reconciliation $\map$ satisfying $\dupcost(\map) \leq d$ exists, but that the algorithm returns no solution.  This can happen in the case that $\map$ is not of cost $cost^{SD}(\G, S)$.}.

\begin{theorem}\label{thm:algo-works}
Algorithm~\ref{alg:fpt-d} is correct and finds a minimum cost mapping $\map^*$ satisfying $\dupcost(\map^*) \leq d$, if any, in time $O(\lceil \frac{\delta}{\lambda} \rceil^d \cdot n \cdot \frac{\delta}{\lambda})$.
\end{theorem}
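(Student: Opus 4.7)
The plan is to proceed by induction on the depth of the recursion tree, maintaining as an invariant that every recursive call is invoked on a clean partial mapping $\map$, and establishing that such a call returns the minimum of $cost^{SD}(\map^*)$ over all completions $\map^*$ of $\map$ satisfying $\dupcost(\map^*) \leq d$ (or $\infty$ if no such completion exists). The base cases (pruning when $\dupcost(\map) > d$ on line~\ref{line:nosol}, and returning the cost when $\map$ is complete on line~\ref{line:returnsol}) are immediate, since $\dupcost$ is monotone under completion.

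For the inductive step, I would first argue that the range of $s$ values tried on line~\ref{line:try-all-s} is sufficient: in any optimal completion $\map^*$ of $\map$, $\map^*(v) \geq \lcamap_{\map}(v)$ by validity, while Lemma~\ref{lem:range-of-dups} provides the upper bound $\map^*(v) \leq par^{\lceil \delta/\lambda \rceil}(\lcamap_{\map}(v))$. Then, once the ``correct'' $s$ is guessed, Lemma~\ref{lem:clean-mapping} forces every other minimal $\bot$-node $w$ with $\lcamap_{\map}(w) \leq s$ to satisfy $\map^*(w) = s$, and Lemmas~\ref{lem:spec-compl} and~\ref{lem:same-height-compl} force every subsequently-generated easy node to be mapped to its $\lcamap_{\map'}$ image. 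Therefore the partial mapping $\map'$ passed to the recursive call on line~\ref{line:reccall} agrees with $\map^*$ on every node so far determined, and the inductive hypothesis yields $cost^{SD}(\map^*)$.

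The main obstacle is verifying that $\map'$ is again clean before the recursive call. Property C1 is automatic thanks to the while loop on line~\ref{line:remap-easy}. Property C2 is delicate: I need to show that every duplication $w$ of $\map'$ satisfies $\map'(w) \leq \lcamap_{\map'}(u)$ or is incomparable with $\lcamap_{\map'}(u)$, for every new minimal $\bot$-node $u$ of $\map'$. The duplications of $\map'$ split into old duplications (inherited from $\map$) and new ones created at or below $s$ (the node $v$ itself, any $w$ with $\lcamap_{\map}(w) \leq s$ picked on line~\ref{line:remap-other-minimal}, and easy duplications produced by Lemma~\ref{lem:same-height-compl}). Each new minimal $\bot$-node $u$ of $\map'$ has a child that was just mapped, so $\lcamap_{\map'}(u) \geq s \geq \lcamap_{\map}(v)$; the old-duplication case then follows by applying the cleanness of $\map$ with respect to $v$ and lifting the comparison from $\lcamap_{\map}(v)$ to $\lcamap_{\map'}(u)$, while the new-duplication case follows because every new duplication sits at or below $s \leq \lcamap_{\map'}(u)$.

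Finally, for the complexity, observe that a non-easy lowest minimal $\bot$-node is necessarily a required duplication (otherwise Lemma~\ref{lem:spec-compl} would make it easy). Combining this with the cleanness of $\map$ (in particular C2, which forbids any duplication of $\map$ at $s$ when $s > \lcamap_{\map}(v)$), mapping $v$ to any allowed $s$ strictly increases $\dupcost$ by at least one, and the additional forced assignments on line~\ref{line:remap-other-minimal} only append pairwise-incomparable duplications at $s$, so this increase is not cancelled. Hence each recursive descent raises $\dupcost(\map)$ by at least one and the search tree has depth at most $d$ before line~\ref{line:nosol} fires; the branching factor is $\lceil \delta/\lambda \rceil$, and the per-node work (finding the lowest minimal $\bot$-node, performing the forced remappings, and cleaning up the easy nodes) is $O(n \cdot \delta/\lambda)$ via a straightforward bottom-up pass, giving the advertised bound $O(\lceil \delta/\lambda \rceil^d \cdot n \cdot \delta/\lambda)$.
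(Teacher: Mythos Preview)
Your overall approach is the same as the paper's: induction on the depth of the search tree, using Lemma~\ref{lem:range-of-dups} to bound the branching, Lemma~\ref{lem:clean-mapping} for line~\ref{line:remap-other-minimal}, Lemmas~\ref{lem:spec-compl}--\ref{lem:same-height-compl} for line~\ref{line:remap-easy}, and then arguing that $\map'$ is clean and that $\dupcost$ strictly increases at each level. The base cases, the correctness of the branching, and the complexity count are all fine.

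There are, however, two genuine (though fillable) gaps in your verification of C2.

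\textbf{(i) You only treat ``new'' minimal $\bot$-nodes.} You write that every minimal $\bot$-node $u$ of $\map'$ ``has a child that was just mapped, so $\lcamap_{\map'}(u)\geq s$''. This is only true for those $u$ that were \emph{not} already minimal $\bot$-nodes in $\map$. A minimal $\bot$-node $u$ of $\map$ with $\lcamap_{\map}(u)\not\leq s$ is skipped on line~\ref{line:remap-other-minimal}, its children's mappings are unchanged, so $\lcamap_{\map'}(u)=\lcamap_{\map}(u)$, and it remains a minimal $\bot$-node in $\map'$ with $\lcamap_{\map'}(u)$ possibly incomparable with $s$. You must argue C2 for these $u$ as well (this is easy: old duplications are handled by C2 of $\map$ applied to $u$ itself rather than to $v$, and new duplications sit at $s$ or at species that already carried an old duplication, as in point (ii) below).

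\textbf{(ii) ``Every new duplication sits at or below $s$'' needs justification.} The duplications created on lines~\ref{line:remap-v}--\ref{line:remap-other-minimal} are at $s$, but an easy node $w$ processed on line~\ref{line:remap-easy} is mapped to $\lcamap_{\map'}(w)$, which is not obviously $\leq s$. What you need is the observation that an easy-via-Lemma~\ref{lem:same-height-compl} mapping does not increase any height $h_{\map'}(\cdot)$; hence if such a $w$ becomes a duplication at $t\neq s$, then $h_{\map}(t)>0$ already, and by C2 of $\map$ (applied to $v$, or to any surviving minimal $\bot$-node) the species $t$ cannot be strictly above the relevant $\lcamap_{\map'}(u)$. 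The paper makes exactly this step explicit by taking the \emph{first} newly-created duplication $z$ at a hypothetical offending species and deriving a contradiction from the cleanness of $\map$. Your direct verification can be completed along the same lines, but as written it asserts the key fact rather than proving it.
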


\begin{proof}

We show by induction over the depth of the search tree that, in any recursive call made to Algorithm~\ref{alg:fpt-d} with partial mapping $\map$, the algorithm returns the cost of an optimal completion $\map^*$ of $\map$ having 
$\dupcost(\map^*) \leq d$, or $\infty$ if no such completion exists - assuming that the algorithm receives a clean mapping $\map$ as input.
Thus in order to use induction, we must also show that at each recursive call done on line~\ref{line:reccall}, $\map'$ is a clean mapping.  We additionally claim that the search tree created by the algorithm has depth at most $d$.  To show this, we will also prove that every $\map'$ sent to a recursive call satisfies $\dupcost(\map') = \dupcost(\map) + 1$.

The base cases of lines~\ref{line:nosol}~-~\ref{line:returnsol} are trivial.
For the induction step, 
let $v$ be the lowest minimal $\bot$-node chosen on line~\ref{line:choose-bot}.
By Lemma~\ref{lem:range-of-dups}, if $\map^*$ is an optimal completion of $\map$ and $s = \map^*(v)$, then 
$\lcamap_{\map}(v) \leq s \leq par^{\lceil \delta/\lambda \rceil}(\lcamap_{\map}(v))$.  
We try all the $\lceil \delta/\lambda \rceil$ possibilities in the for-loop on line~\ref{line:try-all-s}.
The for-loop on line~\ref{line:remap-other-minimal} is justified by Lemma~\ref{lem:clean-mapping}, 
and the for-loop on line~\ref{line:remap-easy} is justified by Lemma~\ref{lem:spec-compl}
and Lemma~\ref{lem:same-height-compl}.
Assuming that $\map'$ is clean on line~\ref{line:reccall}, by induction the recursive call will return the cost of an optimal completion $\map^*$ of $\map'$
having $\dupcost(\map^*) \leq d$, if any such completion exists.  
It remains to argue that for every $\map'$ sent to a recursive call on line \ref{line:reccall}, $\map'$ is clean and $\dupcost(\map') = \dupcost(\map) + 1$ .

Let us first show that such a $\map'$ is clean, for each choice of $s$ on line~\ref{line:try-all-s}.  
There clearly cannot be an easy node under $\map'$ after 
line~\ref{line:remap-easy}, so we must show C2, i.e. that for any minimal $\bot$-node $w$ under $\map'$, there is no duplication $z$ under $\map'$ satisfying $\map'(z) > \lcamap_{\map'}(w)$.
Suppose instead that $\map'(z) > \lcamap_{\map'}(w)$ for some duplication node $z$.  
Let $w_0$ be a descendant of $w$ that is a minimal $\bot$-node in $\map$ (note that $w_0 = w$ is possible).
We must have $\lcamap_{\map'}(w) \geq \lcamap_{\map}(w_0)$. 
By our assumption, we then have $\map'(z) > \lcamap_{\map'}(w) \geq \lcamap_{\map}(w_0)$.
Then $z$ cannot be a duplication under $\map$, as otherwise $\map$ itself could not be clean (by C2 applied on $z$ and $w_0$).
Thus $z$ is a newly introduced duplication in $\map'$, and so $z$ was a $\bot$-node under $\map$.
Note that Algorithm~\ref{alg:fpt-d} maps $\bot$-nodes of $\G$ one after another, in some order $(z_1, z_2, \ldots, z_k)$.  Suppose without loss of generality that $z$ is 
the first duplication node in this ordering that gets mapped to $\map'(z)$.
There are two cases: either $\map'(z) \neq s$, or $\map'(z) = s$.

Suppose first that $\map'(z) \neq s$.
Lines~\ref{line:remap-v} and~\ref{line:remap-other-minimal} can only map $\bot$-nodes to $s$, 
and line~\ref{line:remap-easy} either maps speciation nodes, 
or easy nodes that become duplications.  Thus when $\map'(z) \neq s$, we may assume that $z$ falls into the latter case, i.e. $z$ is easy before being mapped, so that mapping $z$ to $\map'(z)$ does not increase $h_{\map'}(\map'(z))$. Because $z$ is the first $\bot$-node that gets mapped to $\map'(z)$, this is only possible if there was already a duplication $z_0$ mapped to $\map'(z)$ in $\alpha$.  This implies that $\map(z_0) = \map'(z) > \lcamap_{\map}(w_0)$, and that $\alpha$ was not clean (by C2 applied on $z_0$ and $w_0$).  This is a contradiction.  

We may thus assume that $\map'(z) = s$.  This implies $\lcamap_{\map'}(w) < \map'(z) = s$. 
If $w$ was a minimal $\bot$-node in $\map$, it would have been mapped to $s$ on line~\ref{line:remap-other-minimal}, and so in this case $w$ cannot also be a minimal $\bot$-node in $\map'$, as we supposed.  If instead $w$ was not a minimal $\bot$-node in $\map$, then $w$ has a descendant $w_0$ that was a minimal $\bot$-node under $\map$.  We have $\lcamap_{\map}(w_0) \leq \lcamap_{\map'}(w) < s$, which implies that $w_0$ gets mapped to $s$ on line~\ref{line:remap-other-minimal}.  This makes $\lcamap_{\map'}(w) < s$ impossible, and we have reached a contradiction.
We deduce that $z$ cannot exist, and that $\map'$ is clean.

It remains to show that $\dupcost(\map') = \dupcost(\map) + 1$. 
Again, let $s$ be the chosen species on line~\ref{line:try-all-s}.
Suppose first that $s = \lcamap_{\map}(v)$.  Then $h_{\map[v \rightarrow s]}(s) = h_{\map}(s) + 1$, 
as otherwise $v$ would be easy under $\map$, contradicting its cleanness.
In this situation, as argued in the proof of Lemma~\ref{lem:clean-mapping}, 
each node $w$ that gets mapped to $s$ on line~\ref{line:remap-other-minimal} or on line~\ref{line:remap-easy} is easy, 
and thus cannot further increase the height of the duplications in $s$.  
If $s > \lcamap_{\map}(v)$, then $h_{\map[v \rightarrow s]}(s) = 1 = h_{\map}(s) + 1$, 
since by cleanness no duplication under $\map$ maps to $s$.
Here, each node $w$ that gets mapped on line~\ref{line:remap-other-minimal}
has no descendant nor ancestor mapped to $s$, and thus the height does not increase.
Noting that remapping easy nodes on line~\ref{line:remap-easy}
cannot alter the duplication heights, we get in both cases that $\dupcost(\map[v \rightarrow s]) = \dupcost(\map) + 1$.
This proves the correctness of the algorithm.

As for the complexity, the algorithm creates a search tree of degree $\lceil \delta/\lambda \rceil$ and of depth at most $d$.
Each pass can easily seen to be feasible in time $O(\delta/\lambda \cdot n)$ (with appropriate pre-parsing to compute $\lcamap_{\map}(v)$ in constant time, and to decide if a node is easy or not in constant time as well), and so 
the total complexity is $O(\lceil \delta/\lambda \rceil^d n \cdot \frac{\delta}{\lambda})$.
\end{proof}



\section{Experiments}

\begin{wrapfigure}{R}{.4\linewidth}
\vspace{-.55cm}
\begin{center}
\includegraphics[width=\linewidth]{./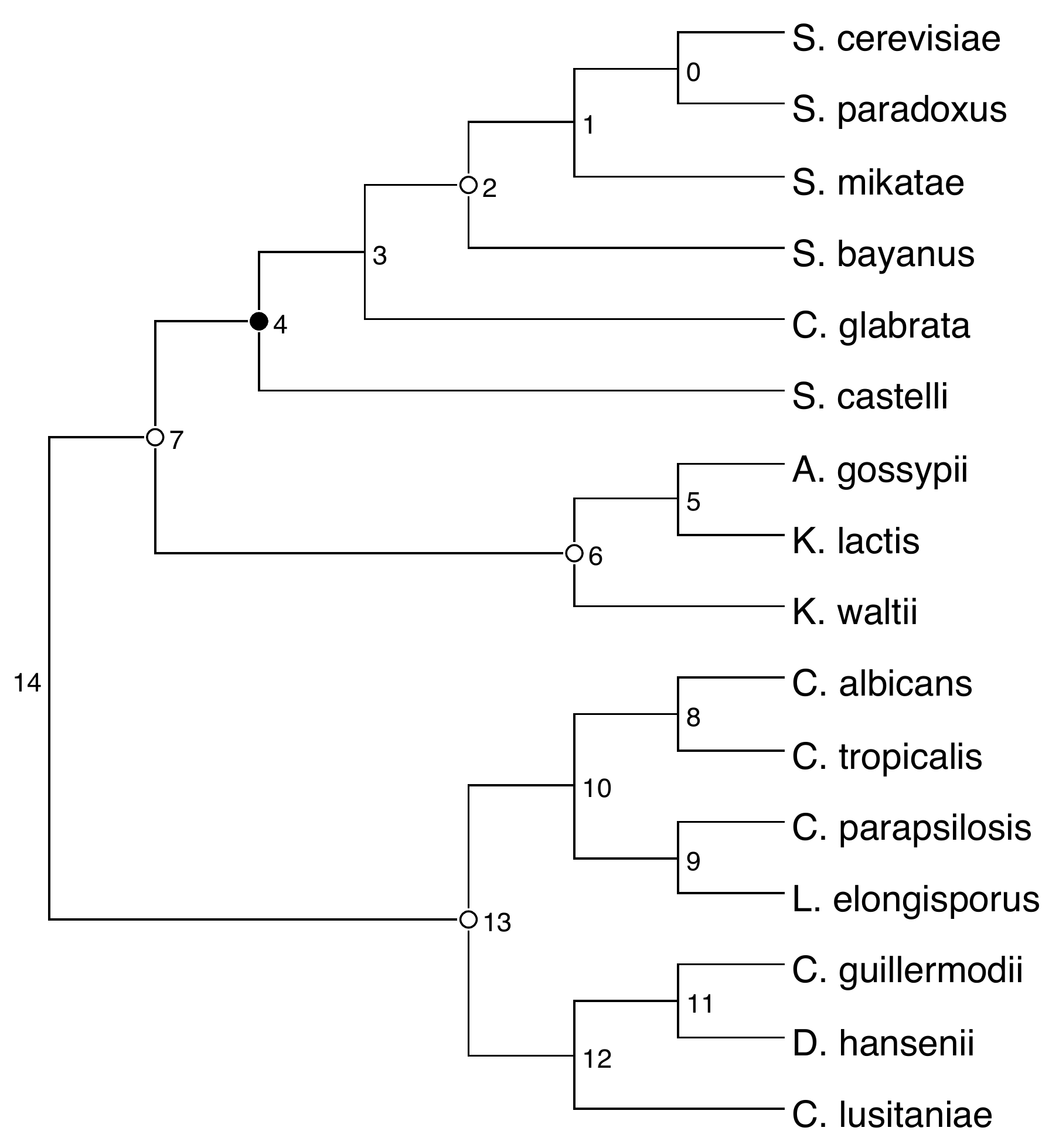}\caption{The species tree phylogeny for the yeast data set described in \cite{butler2009evolution}. Numbers at the internal nodes are meaningless and are only used to refer to the nodes in the main text.}
\label{fig:ST_yeast}
\end{center}
\vspace{-1.6cm}
\end{wrapfigure}

We used our software to reanalyze  a data set of 53 gene trees for 16 eukaryotes presented in \cite{guigo1996reconstruction}
and already reanalyzed in \cite{page2002vertebrate,bansal2008multiple}.
In \cite{bansal2008multiple}, the authors showed that, if segmental duplications are not accounted for, we get a solution having $\dupcost$ equal to 9, while their software (ExactMGD) returns a solution with $\dupcost$ equal to 5. We were able to retrieve the solution with maximum height of 5 fixing $ \delta \in [28, 61]$ and $\lambda=1$, but, as soon as $\delta > 61$, we got a solution with maximum height of 4
where no duplications are placed in the branch leading to the Tetrapoda clade (see \cite[Fig. 1]{page2002vertebrate}) while the other  locations of segmental duplications inferred in \cite{guigo1996reconstruction} are confirmed\footnote{Note that for this data set we used a high value for $\frac{\delta}{\lambda}$ since, because of the sampling strategy, we expect that all relevant genes have been sampled (recall  that in ExactMGD, $\lambda$ is implicitly set to 0).}. This may sow some doubt on the actual existence of a segmental duplication in the LCA of the Tetrapoda clade.

We also reanalyzed the data set of yeast species described in \cite{butler2009evolution}. First, we selected from the data set the 2379 gene trees containing all 16 species and refined unsupported branches using the method described in \cite{DBLP:journals/bioinformatics/JacoxWTS17} and implemented in ecceTERA \cite{ecceTERA} with a bootstrap threshold of 0.9 and $\delta=\lambda=1$. Using our method with $\delta=1.5$, $\lambda=1$ we were able to detect the ancient genome duplication in Saccharomyces cerevisiae already established using synteny information \cite{kellis2004proof}, with 216 gene families supporting the event.
Other nodes with a signature of segmental duplication are nodes 7, 6, 13 and 2 (refer to Fig. \ref{fig:ST_yeast}) with respectively  190, 157, 148 and 136 gene families supporting the event. It would be interesting to see if the synteny information supports these hypotheses.

\section{Conclusion}



This work poses a variety of questions that deserve further investigation.
The complexity of the problem when $\delta/\lambda$ is a constant remains an open problem.  Moreover, our FPT algorithm can handle data sets with a sum of duplication height of about $d = 30$, but in the future, one might consider whether there exist fast approximation algorithms for MPRST-SD in order to attain better scalability.
Other future directions include a multivariate complexity analysis of the problem, in order to understand whether
it is possible to identify other parameters that are small in practice.
Finally, we plan to extend the experimental analysis to other data sets, for instance for the detection of whole genome duplications in plants.
 
 \bibliography{biblio}

\begin{thebibliography}{10}

\bibitem{bansal2008multiple}
Mukul~S Bansal and Oliver Eulenstein.
\newblock The multiple gene duplication problem revisited.
\newblock {\em Bioinformatics}, 24(13):i132--i138, 2008.

\bibitem{butler2009evolution}
Geraldine Butler, Matthew~D Rasmussen, Michael~F Lin, Manuel~AS Santos,
  Sharadha Sakthikumar, Carol~A Munro, Esther Rheinbay, Manfred Grabherr, Anja
  Forche, Jennifer~L Reedy, et~al.
\newblock Evolution of pathogenicity and sexual reproduction in eight candida
  genomes.
\newblock {\em Nature}, 459(7247):657, 2009.

\bibitem{chauve2009new}
Cedric Chauve and Nadia El-Mabrouk.
\newblock New perspectives on gene family evolution: losses in reconciliation
  and a link with supertrees.
\newblock In {\em Annual International Conference on Research in Computational
  Molecular Biology}, pages 46--58. Springer, 2009.

\bibitem{Chauve127548}
Cedric Chauve, Akbar Rafiey, Adrian~A. Davin, Celine Scornavacca, Philippe
  Veber, Bastien Boussau, Gergely Szollosi, Vincent Daubin, and Eric Tannier.
\newblock Maxtic: Fast ranking of a phylogenetic tree by maximum time
  consistency with lateral gene transfers.
\newblock {\em bioRxiv}, 2017.
\newblock URL: \url{https://www.biorxiv.org/content/early/2017/11/07/127548}.

\bibitem{davin2018gene}
Adri{\'a}n~A Dav{\'\i}n, Eric Tannier, Tom~A Williams, Bastien Boussau, Vincent
  Daubin, and Gergely~J Sz{\"o}ll{\H{o}}si.
\newblock Gene transfers can date the tree of life.
\newblock {\em Nature ecology \& evolution}, page~1, 2018.

\bibitem{phdDuchemin}
Wandrille Duchemin.
\newblock {\em Phylogeny of dependencies and dependencies of phylogenies in
  genes and genomes}.
\newblock PhD thesis, Université de Lyon, 2017.

\bibitem{duchemin2017decostar}
Wandrille Duchemin, Yoann Anselmetti, Murray Patterson, Yann Ponty,
  S{\`e}verine B{\'e}rard, Cedric Chauve, Celine Scornavacca, Vincent Daubin,
  and Eric Tannier.
\newblock Decostar: Reconstructing the ancestral organization of genes or
  genomes using reconciled phylogenies.
\newblock {\em Genome biology and evolution}, 9(5):1312--1319, 2017.

\bibitem{fellows1998multiple}
Michael Fellows, Michael Hallett, and Ulrike Stege.
\newblock On the multiple gene duplication problem.
\newblock In {\em International Symposium on Algorithms and Computation}, pages
  348--357. Springer, 1998.

\bibitem{goodman1979fitting}
Morris Goodman, John Czelusniak, G~William Moore, Alejo~E Romero-Herrera, and
  Genji Matsuda.
\newblock Fitting the gene lineage into its species lineage, a parsimony
  strategy illustrated by cladograms constructed from globin sequences.
\newblock {\em Systematic Biology}, 28(2):132--163, 1979.

\bibitem{guigo1996reconstruction}
Roderic Guigo, Ilya Muchnik, and Temple~F Smith.
\newblock Reconstruction of ancient molecular phylogeny.
\newblock {\em Molecular phylogenetics and evolution}, 6(2):189--213, 1996.

\bibitem{ecceTERA}
Edwin Jacox, Cedric Chauve, Gergely~J. Szöllősi, Yann Ponty, and Celine
  Scornavacca.
\newblock eccetera: comprehensive gene tree-species tree reconciliation using
  parsimony.
\newblock {\em Bioinformatics}, 32(13):2056--2058, 2016.

\bibitem{DBLP:journals/bioinformatics/JacoxWTS17}
Edwin Jacox, Mathias Weller, Eric Tannier, and C{\'{e}}line Scornavacca.
\newblock Resolution and reconciliation of non-binary gene trees with
  transfers, duplications and losses.
\newblock {\em Bioinformatics}, 33(7):980--987, 2017.

\bibitem{kellis2004proof}
Manolis Kellis, Bruce~W Birren, and Eric~S Lander.
\newblock Proof and evolutionary analysis of ancient genome duplication in the
  yeast saccharomyces cerevisiae.
\newblock {\em Nature}, 428(6983):617, 2004.

\bibitem{lafond2018accurate}
Manuel Lafond, Mona~Meghdari Miardan, and David Sankoff.
\newblock Accurate prediction of orthologs in the presence of divergence after
  duplication.
\newblock {\em Bioinformatics}, In press, 2018.

\bibitem{luo2011linear}
Cheng-Wei Luo, Ming-Chiang Chen, Yi-Ching Chen, Roger~WL Yang, Hsiao-Fei Liu,
  and Kun-Mao Chao.
\newblock Linear-time algorithms for the multiple gene duplication problems.
\newblock {\em IEEE/ACM Trans. on Computational Biology and Bioinformatics},
  8(1):260--265, 2011.

\bibitem{ma2000gene}
Bin Ma, Ming Li, and Louxin Zhang.
\newblock From gene trees to species trees.
\newblock {\em SIAM Journal on Computing}, 30(3):729--752, 2000.

\bibitem{maddison1997gene}
Wayne~P Maddison.
\newblock Gene trees in species trees.
\newblock {\em Systematic biology}, 46(3):523--536, 1997.

\bibitem{page1994maps}
Roderic~DM Page.
\newblock Maps between trees and cladistic analysis of historical associations
  among genes, organisms, and areas.
\newblock {\em Systematic Biology}, 43(1):58--77, 1994.

\bibitem{page2002vertebrate}
Roderic~DM Page and JA~Cotton.
\newblock Vertebrate phylogenomics: reconciled trees and gene duplications.
\newblock In {\em Pacific Symposium on Biocomputing}, volume~7, pages 536--547,
  2002.

\bibitem{paszek2017efficient}
Jaroslaw Paszek and Pawel Gorecki.
\newblock Efficient algorithms for genomic duplication models.
\newblock {\em IEEE/ACM transactions on computational biology and
  bioinformatics}, 2017.

\bibitem{ullah2015integrating}
Ikram Ullah, Joel Sj{\"o}strand, Peter Andersson, Bengt Sennblad, and Jens
  Lagergren.
\newblock Integrating sequence evolution into probabilistic orthology analysis.
\newblock {\em Systematic Biology}, 64(6):969--982, 2015.

\end{thebibliography}


\section*{Appendix}

\subsection*{Proof of Theorem~\ref{thm:nphard}}

In this section, we show the hardness of MPRST-SD with non-fixed $\delta$ and $\lambda$.

We will first need particular trees as described by the following Lemma.
These trees guarantee that a prescribed set of leaves $L$ are at distance exactly $k$ from the root, and any two of the leaves in $L$ have their LCA at distance at least $k/2$.
Recall that for a tree $T$ and $u,v \in V(T)$, $dist_T(u,v)$ denotes the number of edges 
on the path between $u$ and $v$ in $T$ (we write $dist(u,v)$ for short).
A \emph{caterpillar} is a binary rooted tree in which each internal node has one child that is a leaf, 
with the exception of one internal node which has two such children.

\begin{lemma} \label{lem:well-behaved-trees}
Let $k \geq 8$ be an integer, and let 
$L$ be a given set of at most $k$ labels.
Then there exists a rooted tree $T$ with leaf set $L'$ with $L \subseteq L'$ 
such that for any $l \in L$, $dist(l, r(T)) = k$ and 
for any two distinct $l_1, l_2 \in L$, $dist(l_1,  LCA_{T}(l_1, l_2)) \geq k/2$.
Moreover, $T$ can be constructed in polynomial time with respect to $k$.
\end{lemma}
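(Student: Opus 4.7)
The plan is to build $T$ by gluing a balanced ``top'' binary tree on $|L|$ leaves on top of pendant caterpillars that extend each branch down to depth exactly $k$. Writing $m := |L| \leq k$, the first thing I would check is the numerical inequality that makes the construction possible: the hypothesis $k \geq 8$ guarantees $\lceil \log_2 m \rceil \leq \lceil \log_2 k \rceil \leq k/2$, using $2^{k/2} \geq k$ for $k \geq 4$ (with room to spare as soon as $k > 8$; at $k = 8$ one has $\lceil \log_2 8 \rceil = 3 \leq 4 = k/2$).

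Next I would construct a nearly-complete binary tree $T_{\mathrm{top}}$ with exactly $m$ leaves; every such leaf sits at depth $d_v \in \{\lfloor \log_2 m \rfloor, \lceil \log_2 m \rceil\}$, and in particular $d_v \leq h := \lceil \log_2 m \rceil \leq k/2$. Then I would fix an arbitrary bijection between the leaves of $T_{\mathrm{top}}$ and the labels in $L$, and replace each leaf $v$ of $T_{\mathrm{top}}$ by a binary caterpillar of height exactly $k - d_v$, arranging that the deepest leaf of this caterpillar inherits the label of $L$ assigned to $v$ and that the remaining leaves of the caterpillar receive fresh labels outside $L$. Let $T$ be the resulting binary tree, with leaf set $L' \supseteq L$.

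The verification is then a two-line check. Every $l \in L$ sits at depth $d_v + (k - d_v) = k$ in $T$, which gives the first condition. For two distinct $l_1, l_2 \in L$, they lie in distinct pendant caterpillars hanging from distinct leaves $v_1, v_2$ of $T_{\mathrm{top}}$, so $LCA_T(l_1, l_2) = LCA_{T_{\mathrm{top}}}(v_1, v_2)$ is a node of $T_{\mathrm{top}}$ and therefore has depth at most $h \leq k/2$; hence $dist(l_1, LCA_T(l_1, l_2)) \geq k - k/2 = k/2$, giving the second condition. Polynomial constructibility is immediate since $T$ has $O(k^2)$ nodes and each step of the construction is local.

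The only subtle point in the plan is the inequality $\lceil \log_2 m \rceil \leq k/2$, which is precisely why the hypothesis $k \geq 8$ is stated; past that threshold it is loose, so I do not anticipate any genuine obstacle.
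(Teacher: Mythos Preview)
Your proof is correct and follows essentially the same approach as the paper: a balanced binary top tree of logarithmic depth with caterpillars hanging below to pad each distinguished leaf down to depth exactly $k$, together with the numerical check that $\lceil \log_2 k \rceil \leq k/2$ for $k \geq 8$. The only cosmetic difference is that the paper takes a perfect binary tree on $2^{\lceil \log_2 k \rceil}$ leaves with uniform caterpillars, whereas you take a nearly-complete tree on exactly $|L|$ leaves with caterpillars of slightly varying heights; both yield the same bounds.
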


\begin{proof}
Let $p$ be the smallest integer such that $k \leq 2^p$.
First consider a fully balanced binary tree $T$ on $2^{p}$ leaves, so that each leaf is at distance $p$ from the root.
Then replace each leaf by the root of a caterpillar of height $k - p + 1$ (hence in the caterpillars the longest root-to-leaf path has $k - p$ edges). The resulting tree is $T$.
Choose $k$ of these caterpillars, and in each of them, assign 
a distinct label of $L$ to a deepest leaf (any of the two).
Thus $dist(l, r(T)) = k$ for each $l \in L$, and clearly $T$ can be built in polynomial time.
We also have $dist(l_1, LCA_{S}(l_1, l_2)) \geq
k - p$ for each distinct $l_1, l_2 \in L$.
As $2^p \leq 2k$, we have $p \leq \log(2k)$. 
This implies $k - p \geq k - \log(2k) \geq k/2$ for $k \geq 8$.
\end{proof}
In the following, we will assume that $\lambda = 0$ and $\delta = 1$.
We reduce the Vertex Cover problem to that of finding a mapping of minimum cost for given $\G$ and $S$.
Recall that in Vertex Cover, we are given a graph $G = (V, E)$ and an integer $\beta < n$ and are asked if there exists a subset $V' \subseteq V$ with $|V'| \leq \beta$ such that every edge of $E$ has at least one endpoint in $V'$.  
For such a given instance, denote $V = \{v_1, \ldots, v_n\}$ and $E = \{e_1, \ldots, e_m\}$ (so that $n = |V|$ and $m = |E|$).
The ordering of the  $v_i$'s and $e_j$'s can be arbitrary, but must remain fixed for the remainder of the construction.

Let $K := (n + m)^{10}$, and observe in particular that $\beta < n \ll K$.  We construct a species tree $S$ and a gene forest $\G$ from $G$.
The construction is relatively technical, but we will provide the main intuitions after having fully described it.
For convenience, we will describe $\G$ as a set of gene trees instead of a single graph.
Figure~\ref{fig:nphard-alltrees} illustrates the constructed species tree and gene trees.
The construction of $S$ is as follows:
start with $S$ being a caterpillar on $3n + 2$ leaves.
Let 
$(x_1, y_1, z_1, x_2, y_2, z_2, \ldots, x_n, y_n, z_n, x_{n + 1})$
be the path of this caterpillar  
consisting of the internal nodes, ordered by decreasing depth
(i.e. $x_1$ is the deepest internal node, and $x_{n + 1}$ is the root).
For each $i \in [n]$, call $p_i, q_i$ and $r_i$, respectively,  the leaf child of $x_i, y_i$ and $z_i$. Note that $x_1$ has two leaf children: choose one to name $p_1$ arbitrarily among the two.  
Then for each edge $uv$ of $S$, 
graft a large number, say $K^{100}$, 
of leaves on the $uv$ edge (grafting $t$ leaves on an edge $uv$ consists of subdividing $uv$ $t$ times, thereby creating $t$ new internal nodes of degree $2$, then adding a leaf child to each of these nodes, see Figure~\ref{fig:nphard-alltrees}).  We will refer to these
grafted leaves as the \emph{special $uv$ leaves}, 
and the parents of these leaves as the \emph{special $uv$ nodes}.
These special nodes are the fundamental tool that lets us control the range of duplication mappings.

Finally, for each $i \in \{1, \ldots, n\}$, replace the leaf $p_i$ by a tree $P_i$ that contains $K$ distinguished leaves 
$c_{i,1}, \ldots, c_{i,K}$ such that $dist(c_{i,j}, r(P_i)) = K$ for all $j \in [K]$, and such that $dist(c_{i,j}, lca(c_{i,j}, c_{i,k})) \geq K/2$ for all distinct $j,k \in [K]$.  By Lemma~\ref{lem:well-behaved-trees}, each $P_i$ can be constructed in polynomial time.
Note that the edges inside a $P_i$ subtree do not have special leaves grafted onto them.
This concludes the construction of $S$.

\begin{figure}[h]
\begin{center}
\includegraphics[width=\linewidth]{./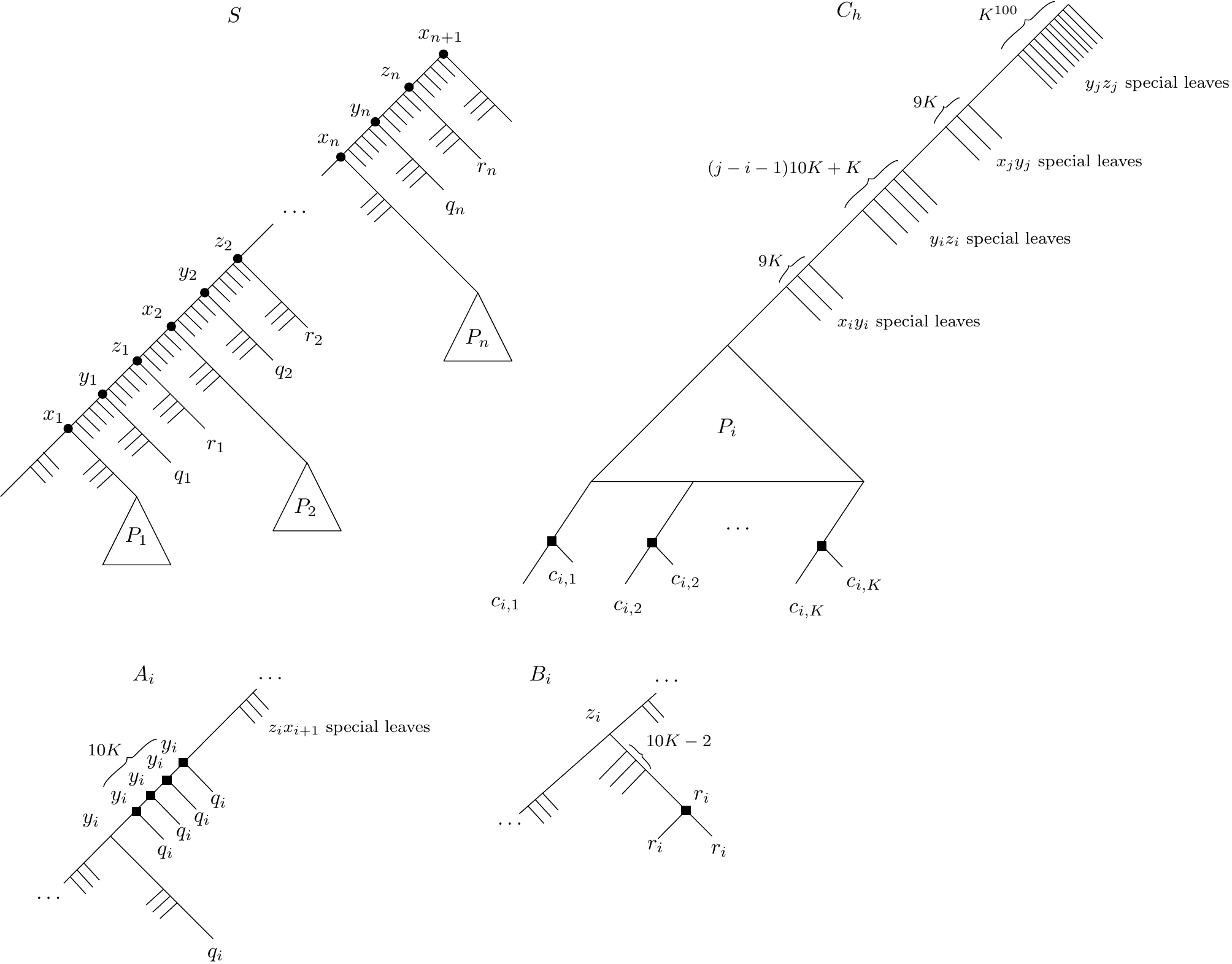}
\caption{The species tree $S$, and the $A_i, B_i$ and $C_h$ trees.  The internal nodes are labeled by their LCA-mapping $\lcamap$, and black squares on the gene trees represent duplication nodes under $\lcamap$.
\label{fig:nphard-alltrees}}
\end{center}
\end{figure}

We proceed with the construction of the set of gene trees $\G$. 
Most of the trees of $\G$ consist of a subset of the nodes of $S$, to which we graft additional leaves to introduce duplications --- some terminology is needed before proceeding.
For $w \in V(S)$, \emph{deleting} $w$ consists in removing $w$ and all its descendants from $S$, 
then contracting the possible resulting degree two vertex (which was the parent of $w$ if $p(w) \neq r(S)$).  If this leaves a root with only one child, we contract the root with its child.
For $X \subseteq \L(S)$, \emph{keeping} $X$ consists of 
successively deleting every node that has no descendant in $X$ until none remains (the tree obtained by keeping $X$ is sometimes called the \emph{restriction} of $S$ to $X$).

The forest $\G$ is the union of three sets of trees $A, B, C$, so that $\G = A \cup B \cup C$.
Roughly speaking $A$ is a set of trees corresponding to the choice of vertices in a vertex cover, $B$ 
is a set of trees to ensure that we ``pay'' a cost of one for each vertex in the cover, and
$C$ is a set of trees corresponding to edges.
For simplicity, we shall describe the trees of $\G$ as having leaves labeled by elements of $\L(S)$ - a leaf labeled $s \in \L(S)$ in a gene tree $T \in \G$ is understood to be a unique gene that belongs to species $s$.

\begin{itemize}
\item
\textbf{The $A$ trees.}
Let $A = \{A_1, \ldots, A_n\}$, one tree for each vertex of $G$.
For each $i \in [n]$, obtain $A_i$ by first taking a copy of $S$, 
then deleting all the special $y_iz_i$ leaves.
Then on the resulting $z_iy_i$ branch, graft $10K$  leaves labeled $q_i$.  Then delete the child of $z_i$ that is also an ancestor of $r_i$ (removing $z_i$ in the process).  Figure~\ref{fig:nphard-alltrees} bottom-left might be helpful.

As a result, under the LCA-mapping $\lcamap$, $A_i$ has a path of $10K$ duplications mapped to $y_i$.  One can choose whether to keep this mapping in $A_i$, or to remap these duplications to $z_i$.

\item
\textbf{The $B$ trees.}
Let $B = \{B_1, \ldots, B_n\}$.
For $i \in [n]$, $B_i$ is obtained from $S$ by deleting all except $10K - 2$ of the special $r_iz_i$ leaves, and grafting a leaf labeled $r_i$ on the edge between $r_i$ and its parent, thereby creating a single duplication mapped to $r_i$ under $\lcamap$.

\item
\textbf{The $C$ trees.}
Let $C = \{C_1, C_2, \ldots, C_m\}$, where for $h \in [m]$, $C_h$ 
corresponds to edge $e_h$.  
Let $v_i, v_j$ be the two endpoints of edge $e_h = \{v_i,v_j\}$, where $i < j$.
To describe $C_h$, we list the set of leaves that we keep from $S$.
Keep all the leaves of the $P_i$ subtree of $S$, and keep a subset of the special leaves defined as follows:  

- keep $9K$ of the special $x_iy_i$ leaves;

- keep $(j - i - 1)10K + K$ of the special $y_iz_i$ leaves;

- keep $9K$ of the special $x_jy_j$ leaves;

- keep all the special $y_jz_j$ leaves.

No other leaves are kept.  Next, in the tree obtained by keeping the aforementioned list of leaves, 
for each $k \in [K]$ we graft, on the edge between $c_{h,k}$ and its parent, another leaf labeled $c_{h,k}$. 
Thus $C_h$ has $K$ duplications, all located at the bottom of the $P_i$ subtree.
This concludes our construction.

\end{itemize}

Let us pause for a moment and provide a bit of intuition for this construction.
We will show that $G$ has a vertex cover of size $\beta$ if and only if there exists a mapping $\map$ of $\G$ of cost at most $10Kn + \beta$.
As we will show later on, two $A_i$ trees cannot have a duplication mapped to the same species of $S$, so these trees alone account for $10Kn$ duplications.  The $r_i$ duplications in the $B_i$ trees account for $n$ more duplications, so that if we kept the LCA-mapping, we would have $10Kn + n > 10Kn + \beta$ duplications.  But these $r_i$ duplications can be remapped to $z_i$, at the cost of creating a path of $10K$ duplications to $z_i$.  This is fine if $A_i$ also has a path of $10K$ duplications to $z_i$, as this does not incur additional height.
Otherwise, this path in $A_i$ is mapped to $y_i$, in which case we leave $r_i$ untouched, summing up to $10K + 1$ duplications for such a particular $A_i, B_i$ pair.  Mapping the duplications of $A_i$ to $y_i$ represents including vertex $v_i$ in the vertex cover, and mapping them to $z_i$ represents not including $v_i$.  Because each time we map the $A_i$ duplications to $y_i$, we have the additional $r_i$ duplication in $B_i$, we cannot do that more than $\beta$ times.

Now consider a $C_h$ tree.  Under the LCA-mapping, the $c_{h,k}$ duplications at the bottom enforce an additional $K$ duplications. 
This can be avoided by, say, mapping all these duplications to the same species.  For instance, we could remap all these duplications to some $y_i$ node of $S$.  But in this case, because of Lemma~\ref{lem:dups-from-below}, every node $v$ of $C_h$ above a $c_{h,k}$ duplication for which $\lcamap(v) \leq y_i$ will become a duplication.  
This will create a duplication subtree $D$ in $C_h$ with a large height, and our goal will be to ``spread'' the duplications of $D$ among the $y_i$ and $z_i$ duplications that are present in the $A_i$ trees.  As it turns out, this will be feasible only if some $y_i$ or $y_j$ has duplication height $10K$, where $v_i$ and $v_j$ are the endpoints of edge $e_h$ corresponding to $C_h$.  If this does not occur, any attempt at mapping the $c_{h,k}$ nodes to a common species will induce a chain reaction of too many duplications created above.
We now proceed with the details.

\begin{theorem}
The {\sc MPRST-SD} problem is NP-hard when $\lambda = 0$.
\end{theorem}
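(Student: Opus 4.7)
The proof establishes that $G$ admits a vertex cover of size at most $\beta$ if and only if the constructed instance has a reconciliation of cost at most $10Kn+\beta$.

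For the $(\Rightarrow)$ direction, given a vertex cover $V'$ of size $\beta$, the plan is to construct an explicit mapping $\alpha$. For each $i \in [n]$, if $v_i \in V'$, keep the LCA-mapping on $A_i$, which places its path of $10K$ duplications at $y_i$, and also keep $B_i$ at the LCA-mapping, contributing one duplication at $r_i$ (total $10K+1$). If $v_i \notin V'$, remap the path in $A_i$ down to $z_i$ (still height $10K$) and remap $B_i$'s duplication into the same $z_i$ forest so that no additional height is created (total $10K$). For each $C_h$ corresponding to $e_h = \{v_i,v_j\}$, select an endpoint in $V'$, say $v_i$, and remap the $K$ duplications $c_{h,k}$ together with their ancestors inside $C_h$ all the way up to $y_i$; a direct check using the structure of $P_i$ shows that the induced duplication forest at $y_i$ embeds inside the existing height-$10K$ forest from $A_i$, so $C_h$ adds no extra height. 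Summing, we obtain total cost exactly $10Kn + |V'| \leq 10Kn + \beta$.

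For the $(\Leftarrow)$ direction, we start from a mapping $\alpha$ of cost at most $10Kn + \beta$ and extract a vertex cover. The first step is structural: because every edge of $S$ outside the $P_i$ subtrees carries $K^{100}$ grafted special leaves, any attempt to move a duplication across such an edge either creates an enormous number of speciations-turned-duplications via Lemma~\ref{lem:dups-from-below} or incurs a loss cost far exceeding the budget. Combined with the Shift-down lemma, this forces each $A_i$'s path of $10K$ duplications to be assigned entirely to $y_i$ or entirely to $z_i$, and no two distinct $A_i, A_j$ can contribute duplications at the same species. Let $V'$ denote the set of $v_i$ such that the duplications of $A_i$ are mapped to $y_i$ under $\alpha$. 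A parallel analysis of $B_i$ then shows that whenever $v_i \in V'$ the $B_i$ duplication cannot be absorbed into the $y_i$ forest and therefore contributes at least one extra unit to $\dupcost(\alpha)$, while if $v_i \notin V'$ it can be absorbed for free at $z_i$. Hence the $A$ and $B$ trees together already account for at least $10Kn + |V'|$ of the cost.

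The heart of the argument is proving that $V'$ covers every edge. Suppose for contradiction that some edge $e_h = \{v_i,v_j\}$ has both endpoints outside $V'$, so that $A_i$ and $A_j$ are mapped to $z_i$ and $z_j$ respectively and neither $y_i$ nor $y_j$ hosts any duplications of $\alpha$. The goal is to show that $C_h$ must contribute at least $K/2$ to $\dupcost(\alpha)$, which blows the budget since $K \gg n \geq \beta$. The crux is that the $K$ duplications $c_{h,k}$ sitting at the bottom of the $P_i$ subtree cannot be consolidated cheaply. Consolidating them inside $P_i$ forces a merge at an internal node of $P_i$; Lemma~\ref{lem:well-behaved-trees} guarantees this merge lies at distance at least $K/2$ from the duplications, so Lemma~\ref{lem:dups-from-below} produces a duplication path of height $K/2$. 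Consolidating them at a species outside $P_i$ either lands them on $y_i$ or $y_j$, which are empty of existing duplications and therefore suffer a fresh height contribution of at least $K$, or on some other species unrelated to $e_h$, which again opens a fresh duplication forest of height at least $K$ on top of everything else. In every case the contribution from $C_h$ is at least $K/2$, a contradiction. The main obstacle will be the meticulous case analysis of where the $C_h$ duplications can migrate and confirming that no clever rearrangement escapes the budget, relying throughout on the large separation constants built into the construction.
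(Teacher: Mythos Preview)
Your proposal has genuine gaps in both directions.

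In the $(\Rightarrow)$ direction, your handling of the $C_h$ trees is incomplete. You write ``select an endpoint in $V'$, say $v_i$'' and remap everything up to $y_i$, but the two endpoints of $e_h=\{v_i,v_j\}$ with $i<j$ are not symmetric. If only $v_j\in V'$, then lifting the $C_h$ duplications all the way to $y_j$ turns the entire path through the $P_i$ subtree, the $9K$ special $x_iy_i$ nodes, the $(j-i-1)10K+K$ special $y_iz_i$ nodes, and the $9K$ special $x_jy_j$ nodes into duplications, a path of length $(j-i+1)10K$; this cannot fit inside a single height-$10K$ forest at $y_j$. The paper resolves this case by spreading that long path across $\{y_j\}\cup\bigcup_{l=i}^{j-1}\{y_l,z_l\}$, using that each slot already has height $10K$ available. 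Your sketch does not address this and the ``direct check'' you allude to fails here.

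In the $(\Leftarrow)$ direction, two of the tools you invoke are inert when $\lambda=0$: the Shift-down lemma bounds losses only, and there is no ``loss cost'' to exceed any budget. More importantly, your structural claim that the $10K$ duplications of $A_i$ must be assigned \emph{entirely to $y_i$} or \emph{entirely to $z_i$} is not justified; nothing prevents them from being spread among the many nodes strictly between $y_i$ and $x_{i+1}$. This is exactly why the paper works with the \emph{sets} $Y_i=\{s:y_i\le s<z_i\}$ and $Z_i=\{s:z_i\le s<x_{i+1}\}$ and defines the cover via the threshold $V'=\{v_i:h_{\alpha}(Y_i)>9K\}$ rather than a clean dichotomy. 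Your subsequent assertion that ``neither $y_i$ nor $y_j$ hosts any duplications of $\alpha$'' when $v_i,v_j\notin V'$ also does not follow from your definition of $V'$: even if $A_i$ contributes nothing at $y_i$, the tree $C_h$ itself may. The paper replaces these rigid structural claims by a counting argument: first $h_{\alpha}(Y_i)+h_{\alpha}(Z_i)\ge 10K$ for every $i$; then a long duplication path inside $C_h$ is traced and, by a budget comparison against $\sum_l(h_{\alpha}(Y_l)+h_{\alpha}(Z_l))$, one is forced to conclude $h_{\alpha}(Y_i)>9K$ or $h_{\alpha}(Y_j)>9K$; finally the $B_i$ tree shows that each such index contributes an extra unit via $R_i$. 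Your overall plan is in the right spirit, but the argument needs this quantitative backbone in place of the unproven rigidity.
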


\begin{proof}
Let $G$ and $\beta$ be a given instance of vertex cover, and let $\G$ and $S$ be constructed as  above.
Call a node $v \in V(\G)$ an \emph{original duplication} if $v$ is a duplication under $\lcamap$.  If $\lcamap(v) = s$, we might call $v$ an original $s$-duplication for more precision.  For $T \in \G$ and $t \in V(S)$, suppose there is a unique node $w \in V(T)$ such that $\lcamap(w) = s$.  
We then denote $w$ by $T[t]$.
In particular, any special node that is present in a tree $T \in \G$ satisfies the property, so when we mention the special $uv$ nodes of $T$,
we refer to the special nodes that are mapped to the corresponding special $uv$ nodes in $S$ under $\lcamap$.
For example in the $C_h$ tree of Figure~\ref{fig:nphard-alltrees}, the indicated set of $(j - i - 1)10K + K$ nodes are called special $y_iz_i$ nodes as they are mapped to the special $y_iz_i$ nodes of $S$ under $\lcamap$.

We now show that $G$ has a vertex cover of size $\beta$ if and only if $\G$ and $S$ admit a mapping $\map$ of cost at most $10Kn + \beta$.

($\Rightarrow$) Suppose that $V' = \{v_{a_1}, \ldots, v_{a_{\beta}} \}$ is a vertex cover of $G$.
We describe a mapping $\map$ such that for each $i \in [n]$:

\begin{itemize}
\item
if $v_i \in V'$, 
then $h_{\map}(y_i) = 10K, h_{\map}(r_i) = 1$ and $h_{\map}(z_i) = 0$;
\item
if $v_i \notin V'$, 
then $h_{\map}(z_i) = 10K, h_{\map}(r_i) = 0$ and $h_{\map}(y_i) = 0$ 
\end{itemize}

and $h_{\map}(w) = 0$ for every other node $w \in V(S)$
(we have made explicit the cases in which $h_{\map}(y_i) = 0, h_{\map}(z_i) = 0$
and $h_{\map}(r_i) = 0$ to emphasize them).  

Summing over all $i \in [n]$, a straightforward verification show that this mapping $\map$ attains a cost of 

$$\sum_{s \in V(S)}h_{\map}(s) = \sum_{v_i \in V'}(10K + 1) + \sum_{v_i \notin V'}10K = 10Kn + \beta$$

It remains to argue that each tree can be reconciled using these duplications heights.  
In the remainder, we shall view these duplication heights as ``free to use'', meaning that we are allowed to create a duplication path of nodes mapped to $s \in V(S)$, as long as this path has at most  $h_{\map}(s)$ nodes, using $h_{\map}$ defined above.

Let $A_i \in A$ be one of the $A$ trees, $i \in [n]$.
If $v_i \in V'$ is in the vertex cover, then we have put $h_{\map}(y_i) = 10K$.  
In this case, setting $\map(w) = \lcamap(w)$ for every node $w$ in $A_i$ is a valid mapping in which $h_{\map}(s)$ described above is respected for all $s \in V(S)$ (the only duplications in $A_i$ are those $10K$ mapped to $y_i$).  
If instead $v_i \notin V'$, then we may set $\map(w) = z_i$ for all the $10K$ original $y_i$-duplications of $A_i$, and set $\map(w) = \lcamap(w)$ for every other node.  This is easily seen to be valid since the ancestors of the original $y_i$-duplications in $A_i$ are all proper ancestors of $z_i$.

Let $B_i \in B$ be a $B_i$ tree, $i \in [n]$.  
If $v_i \in V'$, then $h_{\map}(r_i) = 1$, and so setting $\map(w) = \lcamap(w)$ for all $w \in V(B_i)$ is valid and respects $h_{\map}(s)$ for all $s \in V(S)$ (since the $r_i$ duplication is mapped to $r_i$ and there are no other duplications).
If $v_i \notin V'$, then set $\map(w) = z_i$ for every node between the original $r_i$-duplication in $B_i$ and $B_i[z_i]$ and set $\map(w) = \lcamap(w)$ for every other node $w$.  This creates a path of $10K$ duplications mapped to $z_i$, which is acceptable since $h_{\map}(z_i) = 10K$.  This case is illustrated in Figure~\ref{fig:nphard-reconcile}.

\begin{figure}[t]
\begin{center}
\includegraphics[width=.9\linewidth]{./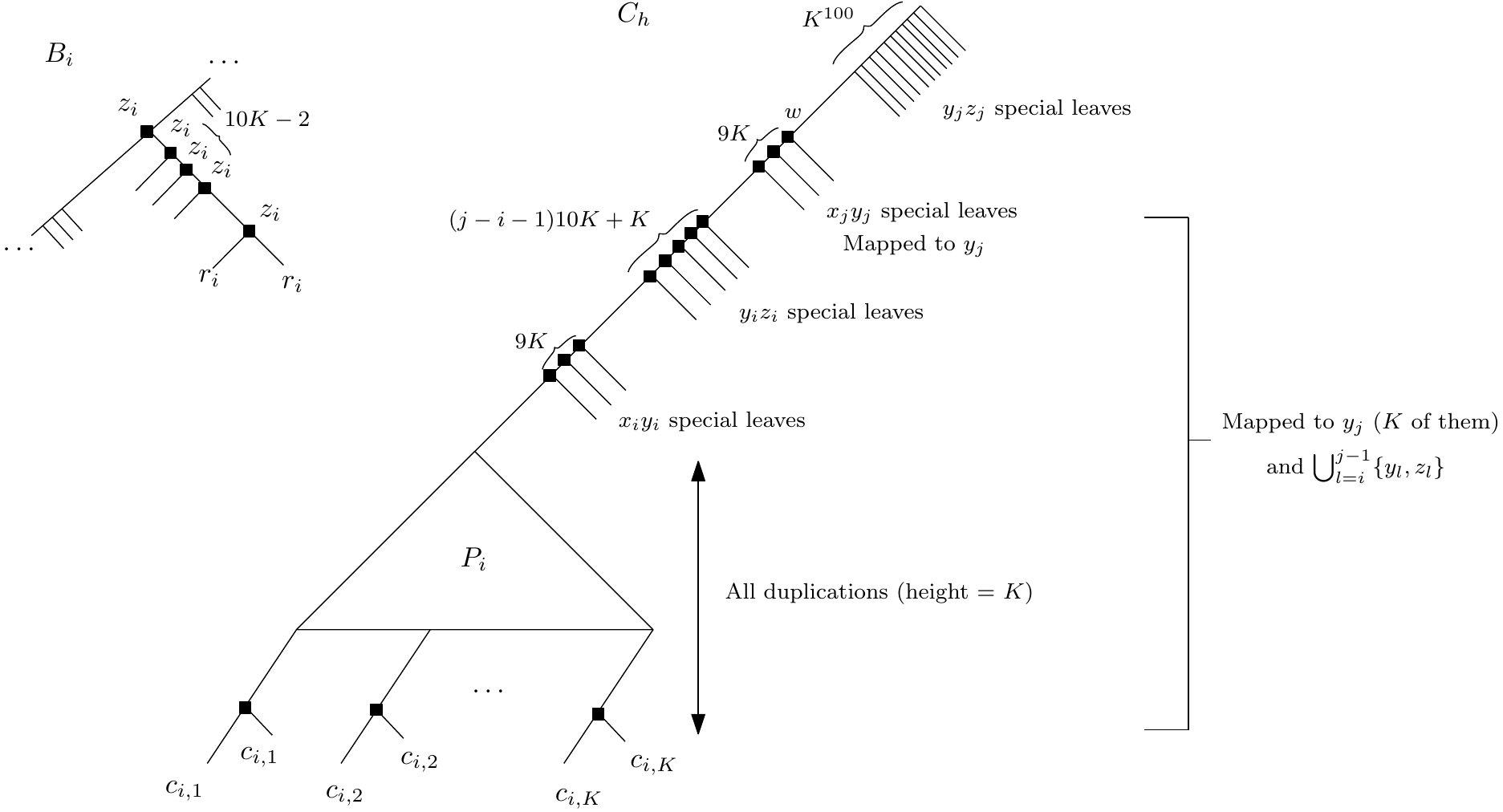}
\caption{Mapping for the $B_i$ tree when $v_i \notin V'$, and the $C_h$ tree when $v_i \notin V'$ but $v_j \in V'$. 
\label{fig:nphard-reconcile}}
\end{center}
\end{figure}

Let $C_h \in W$ be a $C$ tree, $h \in [m]$.
Let $v_i, v_j$ be the two endpoints of edge $e_m = \{v_i, v_j\}$, with $i < j$.
Since $V'$ is a vertex cover, we know that one of $v_i$ or $v_j$ is in $V'$.  Suppose first that $v_i \in V'$.
In this case, we have set $h_{\map}(y_i) = 10K$.
Let $w$ be the highest special $x_{i}y_{i}$ node of $C_h$ (i.e. the closest to the root).  
We set $\map(w) = y_i$ for each internal node descending from $w$.
All of these nodes become duplications, but 
the number of nodes of a longest path from $w$ to an internal node descending from $w$ is $10K$ ($K$ for the $R_i$ subtree, plus $9K$ for the special $x_iy_i$ nodes).  Thus having $h_{\map}(y_i)$ is sufficient to cover the whole subtree rooted at $w$ with duplications.
All the proper ancestors of $w$ can retain the LCA mapping and be speciations, since $\lcamap(w') > y_i$ for all these proper ancestors $w'$.

Now, let us suppose that $v_i \notin V'$, implying $v_j \in V'$.
Note that $h_{\map}(y_l) + h_{\map}(z_l) = 10K$ for all $l \in [n]$.
This time, let $w$ be highest special $x_jy_j$ node of $C_h$.
Again, we will make $w$ and all its internal node descendants duplications.  We map them to the set 
$\{y_j\} \cup \bigcup_{l = i}^{j-1}\{y_l,z_l\}$.  
This case is illustrated in Figure~\ref{fig:nphard-reconcile}.  

More specifically, the longest path from $w$ to a descending internal node contains 
$9K + (j - i - 1)10K + K + 9K + K = (j - i + 1)10K$, where we have counted the special $x_jy_j$ nodes, the special $y_iz_i$ nodes, the special $x_iy_i$ nodes and the $P_i$ subtree nodes.
We have $h_{\map}(y_j) + \sum_{l = i}^{j - 1}(h_{\map}(y_l) + h_{\map}(z_l)) = 10K + (j - i)10K = (j - i + 1)10K$, just enough to map the whole subtree rooted at $w$ to duplications.
It is easy to see that such a mapping can be made valid by first mapping the $9K$ special $x_jy_j$ nodes to $y_j$, then the other nodes descending from $w$ to the rest of $\{y_j\} \cup \bigcup_{l = i}^{j-1}\{y_l,z_l\}$.  This is because all these nodes are ancestors of the special $y_iz_i$ nodes, the special $x_iy_i$ nodes and the $P_i$ nodes (except $y_i$, but we have $h_{\map}(y_i) = 0$ anyway).

We have constructed a mapping $\map$ with the desired duplication heights, concluding this direction of the proof.  Let us proceed with the converse direction.

$(\Leftarrow)$: suppose that there exists a mapping $\map$ of the $\G$ trees of cost at most $10Kn + \beta$.  
We show that there exists a vertex cover of size at most $\beta$ in $G$.  
For some $X \subseteq V(S)$, define $h_{\map}(X) = \sum_{x \in X} h_{\map}(x)$.
For each $i \in [n]$, define the sets 

$$Y_i = \{ s \in V(S) : y_i \leq s < z_i\}$$

and

$$Z_i = \{ s \in V(S) : z_i \leq s < x_{i + 1}\} \quad R_i = \{s \in V(S) : r_i \leq s < z_i\}$$

Our goal is to show that the $Y_i$'s for which $h_{\map}(Y_i) > 9K$ correspond to a vertex cover.
The proof is divided into a series of claims.

\begin{nclaim} \label{claim:basic-bound}
For each $i \in [n]$, 
$h_{\map}(Y_i) + h_{\map}(Z_i) \geq 10K$.
\end{nclaim}

\begin{proof}
Consider the $A_i$ tree of $A$, and let $y_i^*$ be an original $y_i$-duplication in $A_i$.
If $\map(y_i^*) \geq x_{i+1}$, then every node of $A_i$ on the path from $y_i^*$ to $A_i[x_{i+1}]$ is a duplication.  This includes all the $K^{100}$ special   
$z_ix_{i+1}$ nodes, contradicting the cost of $\map$.  Thus $y_i \leq \map(y_i^*) < x_{i + 1}$.  That is, $\map(y_i) \in Y_i \cup Z_i$.  As this is true for all the original $y_i$-duplications of $A_i$, and because $Y_i$ and $Z_i$ are disjoint sets, this shows that $h_{\map}(Y_i) + h_{\map}(Z_i) \geq 10K$.
\end{proof}

The above claim shows that we already need a duplication height of $10nK$ just for the union of the $Y_i$ and $Z_i$ nodes.
This implies the following.

\begin{nclaim} \label{claim:another-basic-bound}
There are at most $\beta$ duplication in $\G$ that are not mapped to a node in $\bigcup_{i \in [n]}(Y_i \cup Z_i)$.
Moreover, for any subset $I \subseteq [n]$, $\sum_{i \in I}(h_{\map}(Y_i) + h_{\map}(Z_i)) \leq 10K|I| + \beta$.
\end{nclaim}

\begin{proof}
The first statement follows from Claim~\ref{claim:basic-bound} and the cost of $\map$.
As for the second statement, 
suppose it does not hold for some $I \subseteq [n]$.
Then $\sum_{i \in [n]} (h_{\map}(Y_i) + h_{\map}(Z_i)) = \sum_{i \in I} (h_{\map}(Y_i) + h_{\map}(Z_i)) + \sum_{i \in [n] \setminus I}(h_{\map}(Y_i) + h_{\map}(Z_i)) > 10Kn + \beta$, a contradiction to the cost of $\map$.
\end{proof}

Now, let $c$ be an original duplication in some tree of $T \in \G$.
The \emph{$c$-duplication} path $\P(c)$ is the maximal path of $T$ with its arcs reversed that 
starts at $c$ and contains only ancestors of $c$ that are duplication nodes under $\map$ (in other words, we start at $c$ and include it in $\P(c)$, traverse the ancestors one after another and include every duplication node encountered, and then stop when reaching a speciation or the root --- therefore every node in $\P(c)$ is a duplication).  
We will treat $\P(c)$ as a set of nodes.
We say that $\P(c)$ ends at node $p$ if $p \in \P(c)$ and $p \geq p'$ for every $p' \in \P(c)$.

\begin{nclaim} \label{claim:dup-path}
Let $C_h \in C$ and let $v_i, v_j$ be the two endpoints of edge $e_h$, with $i < j$. Then there is an original duplication $c \in V(C_h)$ such that 
$\P(c)$ ends at a node $p$ with $\map(p) \geq y_i$. 
\end{nclaim}

\begin{proof}
Let $c_1, \ldots, c_K$ be the original duplication nodes in the $C_h$ tree, which belong to the $P_i$ subtree of $C_h$.
Assume the claim is false, and that $\map(c_k) < y_i$ for every $k \in [K]$.  First observe that if $\map(c_k) \neq \map(c_{k'})$ for every pair $c_k, c_{k'}$ of original 
duplications in $C_h$, 
then $\sum_{k \in [K]} h_{\map}(\map(c_k)) \geq K$.  
Note that $\map(c_k) \in Y_l \cup Z_l$ is impossible for $l < i$, by the placement of the $c_{h,k}$ leaves in $P_i$ in the species tree $S$.
As we further assume that $\map(c_k) < y_i$ for every $k$, none of the $c_k$ duplications is mapped to a 
member of $\bigcup_{l = i}^n (Y_l \cup Z_l)$ either.  
Therefore, none of the $c_k$ duplications is counted in Claim~\ref{claim:basic-bound}, so this implies a cost of at least $10nK + K > 10nK + \beta$, a contradiction.
So we may assume that $\map(c_k) = \map(c_{k'})$ for some distinct original duplications $c_k, c_{k'}$.
Notice that $\alpha(c_k) = \alpha(c_{k'})$ must be a common ancestor of $\lcamap(c_k)$ and $\lcamap(c_{k'})$.  This implies that every node on the path between $c_k$ and $LCA_{C_h}(c_k, c_{k'})$ is a duplication (by Lemma~\ref{lem:dups-from-below}), which in turn implies $|\P(c_k)| \geq K/2$, by the construction of $P_i$.  By assumption, no duplication of $\P(c_k)$ is mapped to a member of $\bigcup_{l = 1}^n (Y_l \cup Z_l)$, and again due to Claim~\ref{claim:basic-bound}, 
the total cost of $\map$ is at least $10nK + K/2 > 10nK + \beta$, a contradiction.
\end{proof}

We can now show that the sets $Y_i$ for which $h_{\map}(Y_i) > 9K$ correspond to a vertex cover.

\begin{nclaim}\label{claim:chosen-are-vc}
Let $e_h \in E$, and let $v_i, v_j$ be the two endpoints of $e_h$, with $i < j$. 
Then at least one of $h_{\map}(Y_i) > 9K$ or $h_{\map}(Y_j) > 9K$ holds.
\end{nclaim}

\begin{proof}
Assume that $h_{\map}(Y_i) \leq 9K$.  By Claim~\ref{claim:dup-path}, there is an original duplication $c$ in $C_h$ such that $\P(c)$ ends at a node $p$ satisfying $\map(p) \geq y_i$.  This implies that every 
special $x_iy_i$ node in $C_h$ is a duplication (by Lemma~\ref{lem:dups-from-below}).  Thus $\P(c)$ contains all these $9K$ nodes, plus $K$ nodes from the $R_i$ subtree.  Since $h_{\map}(Y_i) \leq 9K$, at least 
$K$ of these nodes are mapped to a node outside $Y_i$.  Call $U$ this set of $K$ nodes that are not mapped in $Y_i$.  By the placement of the $P_i$ subtree, none of the nodes of $U$ is mapped to an element of $Y_l \cup Z_l$ for $l < i$.  
Also by Claim~\ref{claim:another-basic-bound}, at most $\beta$ of the $U$ nodes are mapped to a node 
outside of $W := Z_i \cup \bigcup_{l = i + 1}^n (Y_i \cup Z_i)$, so it follows that at least $K - \beta$ nodes of $U$ are mapped to an element of $W$.  
Because all the elements of $W$ are ancestors of the special $y_iz_i$ nodes, this implies that all the special $y_iz_i$ nodes in $C_h$ are duplications, of which there are $(j - i - 1)10K + K$.  

So far, this makes at least $10K + 10K(j - i - 1) + K = (j - i)10K + K$ duplications in $|\P(c)|$.  
Let us now argue that at least $K - 2\beta$ of these are mapped to an ancestor of $y_j$ (not necessarily proper).
If not, then all these duplications are mapped to $\{X\} \cup \bigcup_{l = i}^{j - 1}(Y_i \cup Z_i)$, where $X$ is some subset of $V(S)$ satisfying $|X| \leq \beta$.  
By Claim~\ref{claim:another-basic-bound}, we know that $h_{\map}(X) + \sum_{l = i}^{j - 1}(h_{\map}(Y_i) + h_{\map}(Z_i)) \leq \beta + (j - i)10K + \beta = (j - i)10K + 2\beta < (j - i)10K + K$. 
In fact, this means that at least $K - 2\beta$ of the duplications considered in $\P(c)$ so far that are unaccounted for, which means that they are mapped to an ancestor of $y_j$.  

Because of this, we now get that the $9K$ special $x_jy_j$ nodes of $C_h$ are duplications, which means that in fact, at least $K - 2\beta + 9K = 10K - 2\beta$ duplications of $\P(c)$ are mapped to an ancestor of $y_j$.  
Notice that $\P(c)$ cannot contain a node $w$ with $\map(w) \geq z_j$. 
Indeed, if this were the case, then all the special $y_jz_j$ nodes of $C_h$ would be duplications, of which there are $K^{100}$. 
Thus all the aforementioned $10K - 2\beta$ duplications are mapped to an ancestor of $y_j$ but a proper descendant of $z_j$, i.e. they are mapped in $Y_j$.  So $h_{\map}(Y_i) \geq 10K - 2\beta > 9K$, proving our claim.
\end{proof}

Let $V' = \{v_i : h_{\map}(v_i) > 9K\}$.  Then Claim~\ref{claim:chosen-are-vc} implies that $V'$ is a vertex cover.  It only remains to show that $|V'| \leq \beta$.  This will follow from our last claim.

\begin{nclaim}\label{claim:chosen-are-costly}
Suppose that $h_{\map}(Y_i) > 9K$.  Then $h_{\map}(Y_i) + h_{\map}(Z_i) + h_{\map}(R_i) \geq 10K + 1$.
\end{nclaim}

\begin{proof}
Let $r_i^*$ be the original $r_i$-duplication in the $A_i$ tree.  
We show that $\map(r_i^*) < z_i$.  If $\map(r_i^*) \geq z_i$, then all the $10K$ nodes on the path from $r_i^*$ to $A_i[z_i]$, including $r_i^*$ itself, are duplications mapped to a node in $Z_i$ (these nodes cannot be mapped to an ancestor of the $Z_i$ nodes, due to the presence of the special $z_ix_{i+1}$ nodes above $A_i[z_i]$). 
Thus $h_{\map}(Z_i) \geq 10K$, and so $h_{\map}(Y_i) + h_{\map}(Z_i) \geq 19K$, contradicting Claim~\ref{claim:another-basic-bound}.
It follows that $\map(r_i^*) \notin Z_i$.  Since we cannot have $\map(r_i^*) \in Y_i$, using Claim~\ref{claim:basic-bound} we have $h_{\map}(Y_i) + h_{\map}(Z_i) + h_{\map}(R_i) \geq 10K + 1$.
\end{proof}

To finish the argument, Claim~\ref{claim:chosen-are-costly} implies that $|V'| \leq \beta$, since each $v_i \in V'$ implies that $h_{\map}(Y_i) > 9K$ and that $h_{\map}(Y_i) + h_{\map}(Z_i) + h_{\map}(R_i) \geq 10K + 1$.
More formally, if we had $|V'| > \beta$, letting $I = \{i : h_{\map}(Y_i) > 9K\}$, with $|I| > \beta$, this would imply 
$\sum_{i \in [n]}(h_{\map}(Y_i) + h_{\map}(Z_i) + h_{\map}(R_i)) \geq \sum_{i \in I}(h_{\map}(Y_i) + h_{\map}(Z_i) + h_{\map}(R_i)) + \sum_{i \in [n] \setminus I}(h_{\map}(Y_i) + h_{\map}(Z_i)) > 10Kn + \beta$.
This concludes the proof.
\end{proof}

\end{document}